\newcommand{\mypara}[1]{\smallskip\noindent\textbf{#1.}}  
\newtheorem*{rep@theorem}{\rep@title}
\newcommand{\newreptheorem}[2]{
\newenvironment{rep#1}[1]{
 \def\rep@title{#2 \ref{##1}}
 \begin{rep@theorem}}
 {\end{rep@theorem}}}
\newtheorem{lemma}{Lemma}
\newtheorem{corollary}{Corollary}
\newtheorem{theorem}{Theorem}
\newtheorem{claim}{Claim}
\newtheorem{fact}{Fact}
\newcommand{\capacity}{\textsf{Capacity}}
\newcommand{\wcapacity}{\textsf{WCapacity}}
\newcommand{\scheduling}{\textsf{Scheduling}}
\newcommand{\wscheduling}{\textsf{WScheduling}}
\newcommand{\uscheduling}{\textsf{UScheduling}}
\newcommand{\cG}{{\cal G}}
\newcommand{\cS}{{\cal S}}
\newcommand{\cI}{{\cal I}}
\begin{document}

\title{The Price of Local Power Control in Wireless Scheduling}

\author{
  Magn\'us M. Halld\'orsson
  \qquad
  Tigran Tonoyan \\ \\
  ICE-TCS, School of Computer Science \\
  Reykjavik University \\
  \url{{magnusmh,ttonoyan}@gmail.com}
}

\begin{titlepage}

\maketitle              

\begin{abstract}
  We consider the problem of scheduling wireless links in the physical model, where we seek an assignment of power
  levels and a partition of the given set of links into the minimum
  number of subsets satisfying the 
  signal-to-interference-and-noise-ratio (SINR) constraints. Specifically, we are interested in the efficiency of local power assignment schemes, or \emph{oblivious power schemes}, in approximating wireless scheduling. Oblivious power schemes are motivated by networking scenarios when power levels must be decided in advance, and not as part of the scheduling computation.

We first show that the known algorithms fail to obtain sub-logarithmic bounds; that is, their approximation ratio are $\tilde\Omega(\log \max(\Delta,n))$, where $n$ is the number of links, $\Delta$ is the ratio of the maximum and minimum link  lengths,
and $\tilde\Omega$ hides doubly-logarithmic factors.
We then present the first $O(\log{\log\Delta})$-approximation algorithm,
which is known to be best possible (in terms of $\Delta$) for oblivious power schemes.
We achieve this by representing interference by a conflict graph, which allows the application of graph-theoretic results for a variety of related problems, including the weighted capacity problem. 
We explore further the contours of approximability, and find the choice of power assignment matters; that not all metric spaces are equal; and that the presence of weak links makes the problem harder.
Combined, our result resolve the \emph{price of oblivious power} for wireless scheduling, or the value of allowing unfettered power control.
\end{abstract}

\thispagestyle{empty}

\end{titlepage}

\section{Introduction}

The task of the MAC layer in TDMA-based (time-division multiple access) wireless networks is to determine which nodes can communicate in which time-frequency slot.
A scheduler aims to optimize criteria involving throughput and fairness.
This requires obtaining effective spatial reuse while satisfying the interference constraints.
We treat the fundamental \emph{scheduling} problem of partitioning a given set of communication links into the fewest possible feasible sets with respect to interference constraints, as well as the related continuous problem.

Abstracting wireless interference by \emph{conflict graphs} is a common practice in wireless research. However, arbitrary graphs are too general to be useful (in the worst case) for scheduling problems; hence, the standard modus operandi is to assume geometric intersection graphs, such as \emph{unit disk graphs} or the \emph{protocol model} \cite{kumar00}.
Unfortunately, disk graphs provably lack fidelity to the reality of wireless signals, being simultaneously too conservative and too loose \cite{moscitopology,moscibeyondgraphs}.
Yet, the graph abstraction has its advantages such as local and simple representation and it connects better to the literature.
We adopt the more accurate but complicated \emph{SINR model}, 
where signal decays as it travels and a transmission is successful if 
its strength at the receiver exceeds the accumulated signal strength of interfering transmission by a sufficient (technology determined) factor.
Even here, the standard analytic assumption that signal decays polynomially with the distance traveled is far from realistic \cite{son2006,MaheshwariJD08}, 
but it has been shown that results obtained with that assumption can be translated to the setting of arbitrary measured signal decay \cite{us:PODC14,us:MSWiM14} and statistical models~\cite{rayleigh}.

We try to combine ideas from the research in both models above to treat the scheduling problem.

\mypara{Problem formulations}
Given as input is a set $\Gamma$ of $n$ communication \emph{links}; each link is a 
pair of a sender and receiver nodes in a metric space.
The senders can adjust their transmission power as needed. 
A subset $S \subseteq \Gamma$ of links is \emph{feasible} if there exists a power assignment for which the transmission on each link satisfies the SINR formula (see Section 2) when the links in $S$ transmit simultaneously.
We treat the following two problems:

\scheduling: Partition $\Gamma$ into fewest number of feasible sets.

\wcapacity: Find the maximum weight feasible subset of $\Gamma$, when the links have
positive weights.

The {\wcapacity} problem is of fundamental importance to dynamic scheduling where requests
arrive over time. In a celebrated generic result, Tassiulas and Ephremides \cite{TE92}
show that an optimal scheduling strategy is to schedule in each slot
a maximum set of links weighted by the number of packets waiting.
Approximating {\wcapacity} thus results in dynamic packet scheduling with 
equivalent throughput approximation.

When {\wcapacity} is restricted to unit weights, we get the related unweighted {\capacity} problem.

Power control is a crucial component of wireless scheduling that may dramatically affect scheduling efficiency. Optimal solutions may require context-sensitive power assignments, where the power assigned to a
link depends on all the other links. However, in many scenarios of the multifaceted area of wireless network optimization the links may be bound to use only local information (together with a common strategy) when adjusting their power levels. 
In such power regimes, which are called \emph{oblivious power schemes}, the power chosen for a link depends only on the link itself, 
specifically on the link length. The main question that we address is how well can {\scheduling} and {\wcapacity} be approximated using only oblivious power schemes and whether such approximation can be obtained efficiently.
Note that we still compare 
to the solutions with optimal power control.

\mypara{Related Work}
Gupta and Kumar \cite{kumar00} proposed the geometric version of SINR and initiated average-case analysis 
of capacity known as scaling laws. Considerable progress has been made in recent years in elucidating essential algorithmic properties of the SINR model (e.g., 
\cite{moscibrodaconnectivity,GoussevskaiaHW14,AvinEKLPR12,dinitz,KV10,KesselheimSODA11,DaumGKN13,JurdzinskiKRS14}).
 Moscibroda and Wattenhofer \cite{moscibrodaconnectivity} initiated
worst-case analysis in the SINR model. 
%
Early work on the {\scheduling} problem includes \cite{elbatt02,CruzS03,chafekar07,moscihowoptimal}.
NP-completeness has been shown for {\scheduling} with different forms of power control: none
\cite{goussevskaiacomplexity}, limited \cite{katz2010energy}, and unbounded \cite{lin2012complexity}. 
Distributed algorithms attaining $O(\log n)$-approximation are also known \cite{KV10,icalp11}.

The unweighted {\capacity} problem has 
an efficient constant-factor approximation, due to Kesselheim \cite{KesselheimSODA11}
that holds for general metrics \cite{KesselheimESA12}, as well as for versions with
various fixed power assignments \cite{SODA11}.
This immediately yields $O(\log n)$-approximation for {\scheduling}. 
A different approach is to divide the links into groups of nearly equal length
and schedule each group separately. 
With this approach, numerous $O(\log \Delta)$-approximation results have been argued 
\cite{goussevskaiacomplexity,fu2009power,us:talg12}, where $\Delta$ is the ratio between the longest and shortest link length. In a recent work, we proposed a novel conflict-graph based approach that yields $O(\log^*{\Delta})$ approximation for {\scheduling} and {\wcapacity}~\cite{us:stoc15}. No constant factor approximation algorithm is known for these problems.

It was shown in \cite{FKRV09} that every oblivious power assignment can be worst possible factor $n$ from optimal. In terms of the parameter $\Delta$, however, the bound becomes $\Omega(\log\log\Delta)$~\cite{FKRV09,us:talg12}, i.e. the factor $n$ may appear only when the network contains doubly-exponentially long links. Indeed, for {\capacity}, there is an algorithm using oblivious power that achieves  $O(\log\log \Delta)$-approximation \cite{halwatpowerofnonuniform}, which gives $O(\log\log \Delta \log n)$-approximation for {\scheduling} (see also \cite{us:talg12}).

Thus, our aim is to narrow the gap between oblivious power {\scheduling} approximations involving the factors $\log n$ and $\log \Delta$ and the sub-logarithmic lower bound $\log \log \Delta$.

\mypara{Further Related Work}
{\scheduling} and {\wcapacity} have also been considered for \emph{fixed oblivious power assignments}~\cite{halwatapx, FKRV09, us:talg12, halmitcognitive,fangkeslinear}. The only known constant-factor approximation algorithms for these problems are obtained in the case of the \emph{linear power scheme}~\cite{halmitcognitive, tonoyanlinear}.
%

\mypara{Our Results} First, we examine the previous approaches for {\scheduling} with oblivious power assignments that are known to provide  $O(\log{\Delta})$ (in terms of only $\Delta$) or $O(\log{n})$ (in terms of only $n$) approximation and show that these approximation guarantees cannot be improved. This motivates our main result, which is a $O(\log\log \Delta)$-approximation algorithm for {\scheduling} and {\wcapacity} using oblivious power assignments, which matches the known lower bounds \cite{FKRV09,us:talg12}.
Unlike the state of affairs for the {\capacity} problem, the results are surprisingly sensitive to the
metric and the exact power assignment.
They hold for doubling metrics, but provably fail in general (or even tree) metrics,
and they hold for all power assignments $P_\tau$ with $m/\alpha < \tau < 1$, where $m$ is the doubling dimension of the metric space,
while we show that using other values of $\tau$ fail.
Our bounds are obtained by using the conflict graph-based framework introduced in~\cite{us:stoc15}. This entails finding, for a given set of links, a pair of conflict graphs $A$ and $B$, having the property that independent sets
in $A$ correspond to (SINR-) feasible sets of links, when using the
right (oblivious) power assignment and feasible sets correspond to independent sets in $B$, while the chromatic numbers of the two graphs are close to each other. Thus, at a low cost, we effectively
simplify the all-to-all SINR model by a pairwise relationship, in
fact a graph class for which the core problems are
constant-approximable. 
A key property allowing construction of graphs $A$ is locality, 
which boils down to being safe from the 
effect of links that are far away. We somewhat surprisingly find that locality is achieved 
only for a special sub-family of oblivious power assignments.

We also sketch a distributed algorithm for finding the $O(\log\log\Delta)$-approximate solution for {\scheduling}, which, however, needs further elaboration.

\mypara{Roadmap} Concepts and formal definitions are given in the next section. The limitations of several known approaches for {\scheduling} are discussed in Sec.~\ref{sec:knownalgos}. Sec.~\ref{sec:graphapprox} contains the main result: a $O(\log{\log{\Delta}})$ approximation algorithm for {\scheduling} and {\wcapacity} using oblivious power assignments. Limitation results on the use of general metrics or different oblivious power assignments are given in Sec.~\ref{sec:limitation}, and the notes on distributed computation of schedules are given in Sec.~\ref{sec:distributed}. Most proofs have been relegated to the appendix.

\section{Model and Definitions}\label{sec:model}

\mypara{Communication Links}
Consider a set $\Gamma$ of $n$\label{G:numlinks} \emph{links}, numbered from $1$ to $n$. Each link $i$ represents a unit-demand communication request from a sender $s_i$\label{G:siri} to a receiver $r_i$ - point-size wireless transmitter/receivers located in a metric space with distance function $d$\label{G:distance}. We denote $d_{ij}=d(s_i,r_j)$\label{G:asymdistance} the distance from the sender of link $i$ to the receiver of link $j$,  $l_i=d(s_i,r_i)$\label{G:li}  the \emph{length} of link $i$ and $d(i,j)=d(j,i)$\label{G:symdistance} the minimum distance between a node of link $i$ and a node of link $j$. 
 We let $\Delta(\Gamma)$\label{G:delta} denote the ratio between the longest and shortest link lengths in $\Gamma$, and drop $\Gamma$ when clear from context. We call a set of links $S$ \emph{equilength} if $\Delta(S)\le 2$.

\mypara{Power Schemes}
A \emph{power assignment} for $\Gamma$ is a function $P:\Gamma\rightarrow \mathbb{R}_+$\label{G:power}. For each link $i$, $P(i)$ defines the power level used by the sender node $s_i$.  
We will be particularly interested in power assignment schemes or \emph{power schemes} $P_{\tau}$\label{G:powertau} of the form $P_{\tau}(i)=C\cdot l_i^{\tau\alpha}$, where $C$ is constant for the given network instance. These are called \emph{oblivious power assignments} because the power level of each link depends only on a local information - the link length. Examples of such power schemes are  \emph{uniform} power scheme ($P_0$), \emph{linear} power scheme ($P_1$) and \emph{mean} power scheme ($P_{1/2}$) \cite{FKRV09}.

\mypara{SINR Feasibility}
In the \emph{physical model (or SINR model)} of communication~\cite{rappaport}, a transmission of a link $i$ is successful if and only if 
\begin{equation}\label{E:sinr}
\cS_i\ge \beta\cdot \left(\sum_{j\in S\setminus \{i\}}{\cI_{ji}} + N\right),
\end{equation}
where $\cS_i$ denotes the received signal of link $i$, $\cI_{ji}$ denotes the interference on link $i$ caused by link $j$, 
$N\ge 0$\label{G:noise} is a constant denoting the ambient noise, $\beta>1$\label{G:beta} is the minimum SINR (Signal to Interference and Noise Ratio) required for a message to be successfully received and $S$ is the set of links transmitting concurrently with link $i$. If $P$ is the power assignment used, then $\cS_i=\frac{P(i)}{l_{i}^\alpha}$ and $\cI_{ji}=\frac{P(j)}{d_{ji}^\alpha}$, where $\alpha\in (2,6)$\label{G:alpha} is the path-loss exponent.

A set $L$ of links is called $P$-\emph{feasible} if the condition~(\ref{E:sinr}) holds for each link $i\in L$ when using power assignment $P$. We say $L$ is \emph{feasible} if there exists a power assignment $P$ for which $L$ is $P$-feasible. Similarly, a collection of sets is $P$-feasible/feasible if each set in the collection is.

\mypara{Capacity and Scheduling Problems}
{\scheduling} denotes the problem of partitioning a given set $\Gamma$ into the minimum number of feasible subsets (or \emph{slots}).
{\wcapacity} denotes the problem where we are also given a weight function $\omega:\Gamma\rightarrow \mathbb{R}_+$ on the links and we seek a maximum weight feasible subset $S$ of $\Gamma$.

\mypara{Affectance}
Following~\cite{halwatapx}, we define  the \emph{affectance} $a_P(i,j)$\label{G:affectance} of link $i$ by link $j$ under power assignment $P$ by
\[
a_{P}(j,i)=c_i\frac{\cI_{ji}}{\cS_i}=c_i\frac{P(j)l_i^{\alpha}}{P(i)d_{ji}^{\alpha}},
\]
where $c_i=1/(1-\beta N l_i^\alpha/P(i))$ is a factor depending on the properties of link $i$\footnote{If the denominator of $c_i$ is $0$, i.e. $P(i)=\beta N l_i^\alpha$, then link $i$ must always be scheduled separately from all other links. We assume that there are no such links.}. We let $a_{P}(j,j)=0$ and extend $a_{P}$ additively over sets: $a_P(S, i)=\sum_{j\in S}{a_{P}(j,i)}$ and $a_P(i, S)=\sum_{j\in S}{a_{P}(i, j)}$. It is readily verified that a set of links $S$ is feasible if and only if $a_{P}(S,i) \le 1/\beta$ for all $i\in S$. We call a set of links \emph{$p$-$P$-feasible} for a parameter $p>0$ if $a_{P}(S,i) \le 1/p$.

The following theorem exhibits the flexibility of the SINR threshold value, which has proved useful in obtaining asymptotic results.
 \begin{theorem}\cite{HB14}\label{T:signalstrengthening}
Any $p$-$P$-feasible set can be partitioned into $\left\lceil 2p'/p\right\rceil$ subsets, each of which is $p'$-$P$-feasible.
 \end{theorem}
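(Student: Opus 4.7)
The plan is to prove the theorem by a first-fit greedy coloring of the links of $S$ into $k=\lceil 2p'/p\rceil$ bins $B_1,\ldots,B_k$. Process the links of $S$ in some order (to be chosen for technical convenience, e.g.\ by length); for each link $v$, insert it into the smallest-index bin $B_j$ such that $B_j\cup\{v\}$ remains $p'$-$P$-feasible. To show the greedy always succeeds, I would suppose for contradiction that some $v$ is blocked in every bin and derive a contradiction by bounding the total number of blocking bins.

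A bin $B_j$ blocks $v$ in one of two ways: either (i) the incoming affectance on $v$ itself exceeds the cap, i.e.\ $a_P(B_j,v)>1/p'$, or (ii) some incumbent link $u\in B_j$ would exceed its cap after $v$ joins, i.e.\ $a_P(B_j,u)+a_P(v,u)>1/p'$. The first kind is easy to bound: summing $a_P(B_j,v)>1/p'$ across type-(i) blockers and using the given $a_P(S,v)\le 1/p$ shows that at most $p'/p$ such bins exist.

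The main obstacle is bounding type-(ii) blockers, because $p$-$P$-feasibility constrains only the incoming affectance $a_P(S,v)$ and \emph{not} the outgoing affectance $a_P(v,S)$ that type-(ii) blocking charges against. The plan here is to combine (a)~the global double-counting identity $\sum_{i\in S}a_P(S,i)=\sum_{i\in S}a_P(i,S)\le |S|/p$, which bounds outgoing affectance only on average, with (b)~a careful choice of processing order together with an invariant that each bin retains a constant-fraction slack below $1/p'$ at every step. This should localize the global average bound into a per-link amortized bound that yields at most $p'/p$ type-(ii) blockers. Combining both kinds gives at most $2p'/p\le k$ blocking bins, contradicting the assumption that $v$ was blocked everywhere. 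Getting the tight factor of $2$ — rather than a larger constant or a logarithmic loss — is precisely the technical crux and is where I expect the delicate bookkeeping to live.
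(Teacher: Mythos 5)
This theorem is not proved in the paper at all --- it is quoted from \cite{HB14} --- so the only thing to assess is whether your sketch stands on its own. It does not: step (b), the handling of type-(ii) blockers, is a placeholder rather than an argument, and it sits exactly where the entire difficulty of the theorem lives. The identity $\sum_{i\in S}a_P(S,i)=\sum_{i\in S}a_P(i,S)\le |S|/p$ cannot be ``localized'' into a per-link bound: a $p$-$P$-feasible set can contain a single link $v$ with $a_P(v,S)=\Theta(|S|/p)$ (take $a_P(v,u)=1/p$ for every other $u$ and all remaining affectances zero; every in-affectance is still $\le 1/p$). For such a $v$, type-(ii) blocking in as many as $k-1$ bins is consistent with all your hypotheses even when every bin is nearly empty --- it only requires $a_P(v,u_j)>1/p'$ for one witness $u_j$ per bin, i.e.\ $a_P(v,S)>(k-1)/p'\approx 1/p$, which $p$-feasibility does not forbid. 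So the count of type-(ii) blockers genuinely can exceed $p'/p$, no processing order fixes this by averaging alone, and the ``invariant that each bin retains constant-fraction slack'' is precisely the unproved claim. Note also that in the star example a good partition \emph{exists} (put all the $u_j$ in one bin), but an online first-fit that has irrevocably spread the $u_j$'s across bins before seeing $v$ cannot recover --- which is why the difficulty is not mere bookkeeping but a structural obstacle to the one-pass greedy.

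The standard route to this ``signal strengthening'' result (in \cite{HB14} and its antecedents) is not a first-fit pass but an exchange/local-search argument: start from an arbitrary partition into $k=\lceil 2p'/p\rceil$ classes and repeatedly move a link $v$ to a class that strictly decreases its \emph{two-way} affectance $a_P(B,v)+a_P(v,B)$ with its own class. The total intra-class two-way affectance $\sum_j\sum_{u,w\in B_j}a_P(u,w)$ is a strictly decreasing potential, so the process terminates, and at a local optimum each link's two-way affectance within its class is at most a $1/k$ fraction of its two-way affectance in all of $S$ --- this is where the factor $2$ in $\lceil 2p'/p\rceil$ comes from, as one pays for both directions. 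I recommend you read the proof in \cite{HB14} to see exactly how the out-affectance term is disposed of there, because that is the same issue your sketch leaves open; as written, your proposal identifies the right obstacle but does not overcome it.
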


We make the standard assumption that for all links $i$ in the instance,
received signal strength is a little larger than necessary to overcome the noise term $N$ alone in the absence of any
other transmissions: $P(i) \ge c \beta N l_i^\alpha$ for some constant $c>1$. This can be achieved by scaling the power levels of links or not having links that are too long.
This assumption helps to avoid the terms $c_i$ in the affectance formula. Indeed, it implies that $c_i\le c/(c-1)$ for all $i$. Then given e.g. a {\scheduling} instance $\Gamma$, we can solve it with $c_i=1$ for all $i$ and $\beta'=(c-1)\beta/c$, getting a feasible solution for the original problem. Moreover, by Thm.~\ref{T:signalstrengthening}, the number of slots obtained will be at most a constant factor away from the optimum of the original problem. Thus, we assume henceforth that $c_i=1$ for all links $i$, i.e. $a_{P}(i,j)=\frac{P(j)l_i^{\alpha}}{P(i)d_{ji}^{\alpha}}$. We have, in particular, $a_{P_{\tau}}(i,j)=\frac{l_i^{(1-\tau)\alpha}l_j^{\tau\alpha}}{d_{ij}^{\alpha}}$.

\emph{Remark.} In practice, there is an upper limit $P_{max}$ on the available power level of links and for some links, even setting $P(i)=P_{max}$ can be insufficient for having $P(i)\ge c\beta N l_i^\alpha$. Such links are called \emph{weak links}. Our assumption thus amounts to excluding weak links. Weak links are further discussed in 
Sec.~\ref{sec:limitation}

\mypara{Fading Metrics}
The \emph{doubling dimension} of a metric space is the infimum of all numbers $\delta > 0$ such that every ball of radius $r>0$ has at most $C\epsilon^{-\delta}$ points of mutual distance at least $\epsilon r$ where $C\geq 1$ is an absolute constant, $\delta >0$ and $0<\epsilon \leq 1$. 
Metrics with finite doubling dimensions are said to be \emph{doubling}. For example, the $m$-dimensional Euclidean space has doubling dimension $m$~\cite{heinonen}.\label{G:dimension}
We will assume for the rest of the paper that the links are located in a doubling metric space with doubling dimension $m < \alpha$. Such metrics are called \emph{fading metrics}.

\begin{table}
\centering
    \begin{tabular}{  l  l c l }
		\textit{Notation} & \textit{Meaning} & \textit{Topic} & \textit{Page}\\\hline
    %
		$m$ & the doubling dimension of the metric space & \textit{Metric Space} & \pageref{G:dimension}\\
    $d$ & the distance function of the metric space & & \pageref{G:distance}\\\hline
		$n$ & the number of links & & \pageref{G:numlinks} \\ 
    $s_i,r_i$ & sender and receiver nodes of link $i$ & & \pageref{G:siri} \\
    $l_i$ & the length of link $i$, $l_i=d(s_i,r_i)$ & \textit{Links} & \pageref{G:li}\\ 
    $d_{ij}$ & the distance from $s_i$ to $r_j$, $d_{ij}=d(s_i,r_j)$ & & \pageref{G:asymdistance} \\
    $d(i,j)$ & the minimum distance between links $i,j$ &  & \pageref{G:symdistance}\\
    $\Delta(L)$ & the maximum ratio between link lengths in $L$ & & \pageref{G:delta} \\\hline
    $\chi(G)$ & the chromatic number of graph $G$ & & \pageref{G:chi}\\
    $\cG_f(L)$ & the $f$-conflict graph over the set $L$ & {\textit{Graphs}} & \pageref{G:gf} \\
    $\cG_\gamma(L)$ & the $f$-conflict graph over $L$ with $f(x)\equiv\gamma$ & & \pageref{G:ggamma}\\
    $\cG_\gamma^\delta(L)$ & the $f$-conflict graph over $L$ with $f(x)=\gamma x^\delta$ & & \pageref{G:ggammadelta}\\\hline
    $P$ & power assignment, $P:L\rightarrow \mathbb{R}_+$ & & \pageref{G:power}\\
    $P_{\tau}$ & oblivious power scheme given by $P_{\tau}(i)\sim l_i^{\tau\alpha}$ & & \pageref{G:powertau}\\
    $\alpha$ & the path loss exponent & \textit{SINR} & \pageref{G:alpha}\\
    $\beta$ & the SINR threshold value & & \pageref{G:beta}\\
    $N$ & the ambient noise term &  & \pageref{G:noise}\\
    $a_{P}(i,j)$ & $=c_j\frac{P(i)l_j^\alpha}{P(j)d_{ij}^\alpha}$, affectance of link $j$ by link $i$ & & \pageref{G:affectance}\\
    \end{tabular}
    \caption{Commonly used notations.}
\end{table}

\section{Limitations of Known Approaches}\label{sec:knownalgos}

We start by considering two algorithms that have been proven to achieve $O(\log{n})$ approximation for {\scheduling} with fixed oblivious power schemes. We show that this approximation bound cannot be essentially improved for these algorithms. Moreover, we show that in terms of only the parameter $\Delta$, the approximation factor is not better than $O(\log{\Delta})$. To achieve this, we construct network instances on the real line for which these algorithms perform relatively poorly.

\mypara{The First-Fit Algorithm}
The \emph{first-fit} algorithm considered in~\cite{halwatapx} was originally used for the uniform power scheme, but can be adapted (using the constant factor approximation algorithm for {\capacity} from~\cite{SODA11}) for other oblivious power schemes as well. The algorithm is a simple greedy procedure, where one starts with empty slots in a fixed order, then the links are processed in \emph{increasing order by length} and a link is assigned to the first slot that is feasible together with the link under consideration. It is known that the first-fit algorithm achieves an approximation factor of $O(\log{n})$ for a wide range of power assignments in general metric spaces~\cite{SODA11}.

The following family of hard network instances for the first-fit algorithm is inspired by a well known tree construction for online graph coloring~\cite{GyarfasL88} and its geometric realization as a disk graph~\cite{CaragiannisFKP07}. The rooted tree $T_k$, ($k\ge 0$), is constructed recursively, as follows. $T_0$ consists of a single root node. For $k\ge 0$, the tree $T_{k+1}$ is obtained from $T_k$ by adding a new child node to the root, then adding a copy of $T_k$, by identifying its root node with the new child. For example, $T_1$ consists of two nodes connected by an edge and $T_2$ consists of a root node that has two children and one ``grandchild''. 
Note that the number of nodes in $T_k$ is $n=2^k$ and the depth is $k=\log{n}$. Let us call the set of leafs of $T_k$ \emph{layer $k$}. For $t=k-1,k-2,\dots,1$, \emph{layer $t$} denotes the set of leafs of the tree that remains after removing layers $k,k-1,...,t+1$. Thus, $T_k$ has $k+1$ layers and the root is in layer $0$. Note also that each layer $t$-node has exactly one child from each of layers $k,k-1,\dots,t+1$.

We construct a set $L_k$ of links on the real line, where each link corresponds to a vertex of $T_k$. Let us fix a value $\delta\in [0,1)$. Links corresponding to the same layer in the tree have equal length, which decreases with increasing layer numbers, and the lengths are such that $\Delta(L_k)=c^k$ for a constant $c>1$. If two links correspond to adjacent vertices they cannot be in the same $P_{\delta}$-feasible slot, otherwise they are spatially well separated. By the results of Sec.~\ref{sec:graphapprox}, $L_k$ can be scheduled in a constant number of slots using an oblivious power scheme $P_{\tau}$. On the other hand, it follows from the construction that when $L_k$ is fed to the first-fit algorithm using $P_{\delta}$ in an increasing order by length, different ``layers'' of links get to be scheduled in different slots, thus giving only a $\Theta(k)=\Theta(\log{\Delta})=\Theta(\log{n})$ approximation.
\begin{theorem}\label{T:firstfit}
Let $\delta\in [0,1)$. For each $N>0$, there is a set of $n>N$ links $L$ on the real line s.t. any first-fit algorithm that treats the links in an increasing order of length and uses power scheme $P_{\delta}$ achieves no better than $\Omega(\log{\Delta})=\Omega(\log{n})$ approximation for {\scheduling}. Moreover, if $\delta > 1/\alpha$ then the approximation bound holds even for the {\scheduling} problem w.r.t. fixed power scheme $P_\delta$.
\end{theorem}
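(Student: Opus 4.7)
The plan is to construct, for every $k$, an explicit link instance $L_k$ on the real line with $|L_k| = 2^k$ and $\Delta(L_k) = \Theta(c^k)$ for a large constant $c > 1$, on which first-fit with $P_\delta$ uses $k+1$ slots while the SINR optimum is $O(1)$. Vertices of the tree $T_k$ are mapped one-to-one to links of $L_k$, with each layer-$t$ vertex becoming a link of length $l_t := c^{-t}$. Positions are assigned recursively so that (i) each parent--child pair is $P_\delta$-conflicting, meaning the affectance $a_{P_\delta}$ between them exceeds $1/\beta$ in at least one direction, and (ii) every non-adjacent pair is placed at distance large enough to be $P_\delta$-non-conflicting. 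Operationally, the recursion embeds the subtree of each layer-$t$ vertex in a segment of length $\Theta(l_t)$ immediately adjacent to the parent link, with distinct sibling subtrees placed in disjoint segments separated by gaps much larger than the $P_\delta$-conflict radius of the enclosing link; choosing $c$ as a sufficiently large function of $\alpha, \beta, \delta$ makes both (i) and (ii) hold.

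With (i) and (ii) in hand I argue the first-fit lower bound by induction on layers, in the order $k, k-1, \dots, 0$ in which first-fit processes them. Inductively assume that after processing layers $k, \dots, t+1$, first-fit has placed each layer-$s$ link into slot $k-s+1$, one layer per slot. For any layer-$t$ link $v$, the tree structure supplies a child of $v$ in every higher layer $s > t$; by (i) each such child is $P_\delta$-conflicting with $v$ and currently occupies one of the existing slots, so $v$ must open the new slot $k-t+1$. Property (ii), together with a straightforward sum-of-affectances check, shows that all layer-$t$ links are simultaneously $P_\delta$-feasible and therefore share this new slot. After layer $0$ is handled, first-fit has used $k+1 = \Theta(\log n) = \Theta(\log \Delta)$ slots.

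For the upper bound on $\mathrm{OPT}(L_k)$, I appeal to the graph-based framework of Section~\ref{sec:graphapprox}: it suffices to exhibit an oblivious scheme $P_\tau$ with $\tau \ne \delta$ (for the real line, any $\tau$ with $1/\alpha < \tau < 1$ should work) under which the conflict graph of $L_k$ has bounded chromatic number. The construction is designed so that the tight parent--child distances under $P_\delta$ become loose under $P_\tau$, while the sibling-subtree separations are robust across scales; a direct affectance calculation then confirms that each link has only $O(1)$ $P_\tau$-conflicts, whence $\mathrm{OPT}(L_k) = O(1)$. Combined with the first-fit bound, this yields the approximation ratio $\Omega(\log n) = \Omega(\log \Delta)$.

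Finally, for $\delta > 1/\alpha$ I strengthen the argument by showing $L_k$ is $O(1)$-slot $P_\delta$-feasible. The relevant feature of the regime $\delta > 1/\alpha$ is that $P_\delta$ penalizes long links more heavily, so by enlarging sibling gaps slightly I can keep parent--child pairs $P_\delta$-conflicting while making every non-adjacent pair $P_\delta$-non-conflicting; a partition of the tree into $O(1)$ classes (e.g.\ by depth modulo a suitable constant) then splits $L_k$ into $O(1)$ $P_\delta$-feasible subsets, so the lower bound still applies against the fixed-power optimum. The main obstacle throughout is the joint tuning of $c$ and the recursive sibling-gap sizes: properties (i) and (ii), and the existence of an $O(1)$-slot feasible (resp.\ $P_\delta$-feasible) schedule, must all propagate through the recursion at every scale, and obtaining a single set of constants that simultaneously validates every such inequality across the fractal placement is the technical heart of the proof.
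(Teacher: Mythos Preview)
Your overall plan coincides with the paper's: realize $T_k$ as links on the line with geometrically scaled layer lengths, make parent--child pairs $P_\delta$-conflicting while keeping non-adjacent pairs well separated, and combine the layer-by-layer first-fit lower bound with an $O(1)$ optimum via the framework of Section~\ref{sec:graphapprox}. Two places where your sketch departs from what actually works:

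\textbf{Parent--child distance.} Putting the child's subtree in a segment of length $\Theta(l_{\mathrm{child}})$ ``immediately adjacent to the parent link'' breaks property~(ii). With parent $p$, child $v$, grandchild $w$ and all gaps of order the child's length, $w$ sits at distance $O(l_v)$ from an endpoint of $p$, and one of $a_{P_\delta}(w,p)$, $a_{P_\delta}(p,w)$ is then of order $(l_w^{\delta} l_p^{1-\delta}/l_v)^\alpha$ or $(l_p^{\delta} l_w^{1-\delta}/l_v)^\alpha$; taking $l_p=1$, $l_v=c^{-1}$, $l_w=c^{-2}$ one of these exceeds $1$ for every $\delta\ne 1/2$, so grandparent--grandchild pairs conflict. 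The paper instead sets the parent--child gap to exactly $d_{ji}=l_i^{1-\delta}l_j^{\delta}$ (so $a_{P_\delta}(j,i)=1$ on the nose), and then chooses the length ratio $x>16^{1/(1-\delta)}$ so that sibling subtrees nest without overlap and every non-tree-adjacent pair becomes $(1,1)$-independent (separated by more than the longer link's length), which is strictly stronger than $P_\delta$-non-conflicting and is what the feasibility lemmas need.

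\textbf{The $O(1)$-optimum step.} The assertion ``each link has only $O(1)$ $P_\tau$-conflicts'' is false in the relevant conflict graph: the root is at distance $l_{\mathrm{root}}^{1-\delta}l_j^{\delta}<l_{\mathrm{root}}$ from each of its $k$ children, so its degree in $\cG_1^1(L_k)$ is $k$, not $O(1)$. The paper's route is different: it shows $\cG_1^1(L_k)\cong T_k$, which is $2$-colorable despite unbounded degree, and then Theorem~\ref{T:constantsens} together with Lemmas~\ref{L:mainlemma1}--\ref{L:mainlemma2} makes each color class $P_\tau$-feasible for every $\tau\in(1/\alpha,1)$. This also dispatches the ``moreover'' clause immediately when $\delta>1/\alpha$, with no need to enlarge sibling gaps or appeal to a depth-modulo partition (which, incidentally, would not produce independent sets in $T_k$, since a layer-$t$ vertex is adjacent to one vertex in \emph{every} deeper layer).
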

\begin{proof}
The following proof uses definitions and results from Sec.~\ref{sec:graphapprox}.
Let us fix a $k>0$ and $\delta\in [0,1)$ and let $x >0$ be a parameter to be defined below. Assume that $\beta=1$. We model the set $L_k$ of links after the tree $T_k$. Each link corresponds to a node of the tree. The links are arranged on the real line. The links corresponding to layer $t$ nodes have length $x^{k-t}$. For instance, the root has length $x^k$ and the leaves have length $1$. Note also that $\Delta(L_k)=x^k$. The root is placed with its sender on the origin and the receiver at the coordinate $x^k$. Assume links $i$ and $j$ are so that the node corresponding to $i$ is the parent of the node of $j$ in the tree (hence, $l_i>l_j$). Moreover, assume that the placement of link $i$ has already been determined. Then we place the link $j$ so that $s_j=r_i+d_{ji}=r_i + l_i^{1-\delta}l_j^\delta$ and $r_j=s_j+l_j$. Such placement guarantees that any two links corresponding to adjacent tree nodes cannot be in the same $P_{\delta}$-feasible set (recall that $\beta=1$). In particular, such a pair of links is $(1, \delta)$-conflicting. 

It remains to specify the value of $x$, in order to complete the construction. We will define $x$ so as to keep links corresponding to non-adjacent tree nodes independent in the sense that they do not affect each other much. 

Let us start by computing the diameter $d(L_k)$ of $L_k$, i.e. the distance from the sender node of the root link to the rightmost receiver node of the set. Note that $d(L_k)$ is the diameter of the subset of the links corresponding to the longest branch of the root in $T_k$, which contains exactly one link of length $x^t$ for $t=0,1,\dots,k$. Thus,
\[
d(L_k)=\sum_{t=0}^k{x^t} + \sum_{t=1}^k{x^{(1-\delta)t}\cdot x^{\delta(t-1)}},
\]
where the first sum corresponds to the lengths of the links, and the second one corresponds to the distances between adjacent links. We have further,
\[
d(L_k)=\sum_{t=0}^k{x^{t}}+\sum_{t=1}^k{x^{t-\delta}}
=\frac{x^{k+1}-1}{x-1} + x^{-\delta}\left(\frac{x^{k+1}-1}{x-1}-1\right)
=\frac{x^{k+1}-1}{x-1}(x^{-\delta} + 1) - x^{-\delta}.
\]
implying that $x^k<d(L_k)<4x^k$ when $x\ge 2$. Consider a link $i$ of length $x^p$ and two of its children $j,k$ of length $x^t$ and $x^{t+1}$ respectively. We want link $k$ to appear to the right of the whole ``subtree'' of links rooted at link $j$; namely, $d(r_i,s_k) > d(r_i,s_j) + 2x\cdot d(L_t)$. Since $i$ is the parent of $j$ and $k$, we have, by definition, $d(r_i,s_k)=x^{p\cdot(1-\delta)}\cdot x^{(t+1)\cdot \delta}$ and $d(r_i,s_j)=x^{p\cdot(1-\delta)}\cdot x^{t\cdot \delta}$. Thus, due to the bound $d(L_t) <4x^t$, it suffices to have: 
$x^{(1-\delta)p+\delta(t+1)} > x^{(1-\delta)p + \delta t} + 8x^{t+1}$ or 
\[
x^{(1-\delta)p+\delta t}(x^{\delta} - 1) >8x^{t+1}.
\] 
Recall that $p\ge t+2$ as link $i$ is strictly longer than its children. Thus, the requirement above boils down to $x^{t-\delta + 2} > 16x^{t+1}$, and thus to $x>16^{\frac{1}{1-\delta}}$. We choose $x$ to be any constant satisfying $x>16^{\frac{1}{1-\delta}}$. Note that with such choice of $x$, the set $L_k$ has the following properties: if two links correspond to adjacent nodes in the tree, they are $(1,\delta)$-conflicting; otherwise, they are $(1,1)$-independent. In particular, the graph $\cG_{1}^1(L_k)$ is isomorphic to $T_k$. Thus, $L_k$ can be split into a constant number of $(\gamma,1)$-independent sets for any constant $\gamma > 0$, by Thm.~\ref{T:constantsens} (which holds even for the linear function on the Euclidean plane, as shown in~\cite{us:stoc15}). By lemmas~\ref{L:mainlemma1} and~\ref{L:mainlemma2}, if the constant $\gamma$ is large enough, each of these subsets is $P_{\tau}$-feasible for any $\tau \in (\frac{1}{\alpha},1)$ (note that the links are in a $1$-dimensional doubling space). Since each layer $t$ link conflicts with a link from \emph{each} layer below, it will take $\Omega(k)$ slots to schedule $L_k$ using a first-fit algorithm with power scheme $P_{\delta}$. These observations imply both claims of the theorem.
\end{proof}

\mypara{The Randomized Algorithm}
 Next we consider the distributed algorithm (and its variants) presented in~\cite{KV10}. In this algorithm, the sender nodes of the links act in synchronous rounds and each sender node transmits with probability $p_i$ or waits with probability  $1-p_i$ in round $i$, where $p_i$ is the same for all links (but may change across the rounds). Once the transmission succeeds in round $i$, the node is silent in subsequent rounds. It is known that a certain choice of the probabilities guarantees an $O(\log{n})$ approximation (w.h.p) for the fixed power {\scheduling} problem  with many oblivious power assignments~\cite{icalp11,KV10}. Note that a family of network instances has been presented in~\cite{icalp11} for which the output of the algorithm is an $\Theta(\log{n})$-approximation, but this construction does not exclude that the $\Theta(\log{n})$ factor may be additive. In fact, the randomized algorithm schedules links in $O(opt + \log^2{n})$ slots when the linear power scheme is used ($\tau=1$), where $opt$ is the optimum schedule length w.r.t. $P_1$. As we show below, this is not the case for power schemes $P_{\tau}$ with $0 < \tau<1$.

Our construction in this case is also modeled after a tree. More precisely, we model a family of instances after the complete $\log^b{n}$-ary tree with $n$ nodes, where $b>0$ is a constant. 
Let parameters $\delta$ and $M<n^{\epsilon}$ ($\epsilon\in (0,1)$ a constant) be fixed. We start by constructing a set of $n/M$ links on the line, where each link corresponds to a node of the complete $\log^{b}{n}$-ary tree of height $\Theta(\frac{\log{(n/M)}}{\log{\log{n}}})$ over a set of $n/M$ vertices. 
In order to complete the construction, we just replace each link with its $M$ identical copies, getting a set of $n$ links with $\Delta=n^{c}$ for a constant $c$ and optimum scheduling number $O(M)$. Using an analysis similar to~\cite[Thm. 6]{HalldorssonK14}, we prove the following theorem.

\begin{theorem}\label{T:randomized}
Let $\delta\in (0,1)$ and the probabilities $p_i (i=1,2,\dots)$ be fixed. For each $N>0$, there is a set of $n>N$ links $L$ on the real line s.t. the randomized algorithm that uses probabilities $p_i (i=1,2,\dots)$ yields only a $\Omega(\log{n}/\log{\log{n}})$ approximation to the {\scheduling} problem with fixed power scheme $P_\delta$, w.h.p. In terms of $\Delta$, the approximation factor is  $\Omega(\log{\Delta}/\log{\log{\Delta}})$.
\end{theorem}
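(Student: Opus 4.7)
The plan is to extend the tree-based construction sketched immediately before the theorem statement and then analyze the randomized algorithm round by round, showing that no matter which sequence $p_1,p_2,\ldots$ of transmission probabilities is used, the algorithm is forced to spend $\Omega(M)$ rounds ``draining'' each level of the tree, for $\Omega(h)$ levels, where $h=\Theta(\log n/\log\log n)$ is the tree height.

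First I would finalize the instance. Fix $b>0$, let $f=\lceil\log^b n\rceil$, and take the complete $f$-ary rooted tree of height $h=\Theta(\log(n/M)/\log\log n)$. Following the recursive layout from the proof of Theorem~\ref{T:firstfit}, map each vertex to a link on the real line so that, for the length-ratio $x$ chosen large enough, every tree-adjacent pair is $(1,\delta)$-conflicting and every tree-non-adjacent pair is $(1,1)$-independent under $P_\delta$; the $P_\delta$-conflict graph of the base instance is then precisely the tree. Replacing each base link by $M$ co-located identical copies yields the set $L$ with $n$ links, $\Delta=n^{\Theta(1)}$, and optimum $P_\delta$-schedule length $opt=O(M)$ (2-color the bipartite tree and spread the $M$ copies at each vertex across $O(M)$ slots in parallel with its non-adjacent, hence $(1,1)$-independent, same-color peers via Lemmas~\ref{L:mainlemma1}--\ref{L:mainlemma2} and Theorem~\ref{T:signalstrengthening}).

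The core of the proof is the lower bound on the randomized algorithm. Consider any round $i$ with probability $p=p_i$, and let $V_t$ be the set of level-$t$ vertices having at least one pending copy at the start of the round. A fixed pending copy of $v\in V_t$ succeeds in this round only if $v$ transmits while every other pending copy of $v$ and every pending copy of a child of $v$ stays silent; letting $k_v$ be the total number of still-pending child-copies, this probability is at most $p(1-p)^{M-1+k_v}$. The key step is a two-case analysis in $p$: if $p(M+k_v)\ge 1$ at the deepest pending non-leaf level, the exponential factor makes parent-copy successes exponentially rare; if $p(M+k_v)<1$, then $p$ is so small that the expected number of successes across all of $V_{t-1}$ is only $O(|V_{t-1}|/f)$, far smaller than the $M|V_{t-1}|$ copies still waiting. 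Either way, while the deepest pending level is not drained, essentially no parent copies succeed, forcing the algorithm to process the tree level by level from the leaves upward.

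To turn this into a total-round lower bound, I would proceed inductively from the deepest occupied level. At this level the only binding constraint among pending copies is the mutual conflict between the $M$ copies of a single vertex, so at most one copy per vertex can succeed per round; a Chernoff bound then gives $\Omega(M)$ rounds to drain the level w.h.p. Iterating up the $h$ levels yields $\Omega(Mh)=\Omega(M\log n/\log\log n)$ rounds w.h.p., which combined with $opt=O(M)$ gives the approximation lower bound in $n$; the statement in $\Delta$ follows since $\log\Delta=\Theta(\log n)$. The main obstacle I anticipate is the uniformity of the argument over the adversary's probability sequence: the two-case $p$-analysis must remain valid as the pending set shrinks across rounds, and a union bound over all rounds and levels must still yield a single high-probability statement; this is the step where the choice $f=\log^b n$ matters, since it gives enough slack for the $O(1/f)$-fraction bound to accumulate across $\Omega(Mh)$ rounds.
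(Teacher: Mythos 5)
Your proposal follows essentially the same route as the paper's proof: the same complete $\log^b n$-ary tree blown up into $M$-cliques with optimum $\Theta(M)$, and the same level-by-level serialization argument in which a parent copy's per-round success probability is $O(1/\mathrm{degree})$ uniformly in $p_i$, accumulated over at most $O(M\log n)$ rounds with the $\log^b n$ fan-out supplying the needed slack for the union bound. The one caveat is that your first case ($p(M+k_v)\ge 1$) does not make successes ``exponentially rare'' when $p(M+k_v)=\Theta(1)$; the uniform inequality $p(1-p)^d\le pe^{-pd}\le 1/d$, which the paper uses, yields the required $O(1/(M+k_v))$ bound in both of your cases and renders the case split unnecessary.
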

\begin{proof}
 Let us assume, for simplicity, that $\beta=1$. We start with the description of a set of links simulating a rooted complete $\log^b{n}$-ary tree over a set of $n/M$ nodes, where $b>1$ is a constant to be chosen and $M=O(n^{\epsilon})$  ($\epsilon\in (0,1)$ a constant) is a parameter. We will often mix the terminology of links and trees, e.g. by speaking of children of links, hoping that will not cause any confusion. We split the tree into levels, where the root is at level $0$ and the nodes of (tree-) distance $t$ from the root constitute the level $t$. Note that the number of nodes at level $t$ is $\log^{tb}{n}$; hence, the number of levels is $k=\Theta\left(\frac{\log(n/M)}{\log{\log{n}}}\right)=\Theta\left(\frac{\log{n}}{\log{\log{n}}}\right)$.  For each $t\ge 0$, the level-$t$ links have equal length $\ell_t$. We assume that 
\begin{equation}\label{E:distrlengthratio}
\ell_t = c\ell_{t+1}\log^{d(t+1)}{n}
\end{equation}
 for large enough constants $c,d>0$. We describe the placement of links on the real line level by level, starting from level $0$, which contains a single link $i$. We set $s_i=0$, $r_i=s_i + \ell_0$, as shown in Figure~\ref{fig:distr}.
The children of link $i$ have length $\ell_1$. We place the $\log^b{n}$ child links of length $\ell_1$ inside the interval occupied by the link $i$, so that  (see Figure~\ref{fig:distr}):
\begin{enumerate}
  \setlength{\itemsep}{0cm}%
  \setlength{\parskip}{0cm}%
\item{the minimum distance of any two child links is at least $2\ell_1$,}
\item{the distance from any child to $r_i$ is at least $\ell_0^{1-\delta}\ell_1^\delta/2$,}
\item{the distance from any child to $r_i$ is at most $\ell_0^{1-\delta}\ell_1^\delta$,}
\item{the distance from $s_i$ to any child is at least $\ell_0/2$.}
\end{enumerate}
\begin{figure}[htbp]
\begin{center}
\includegraphics[width=0.44\textwidth, natwidth=1030, natheight=384]{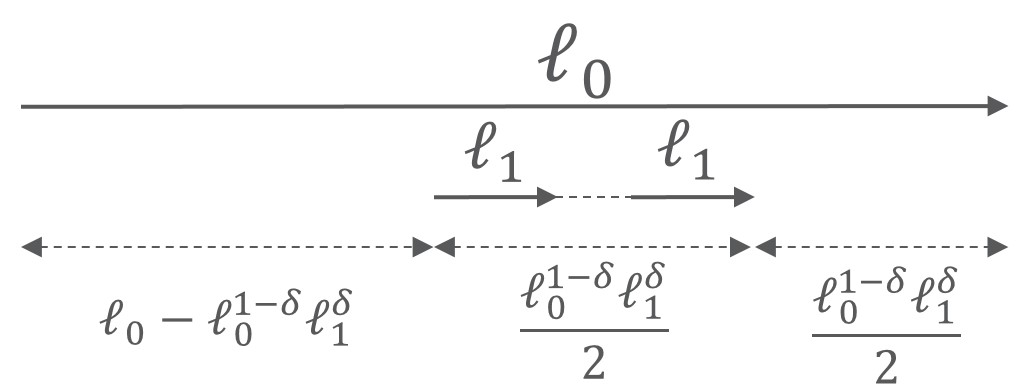}
\caption{The first step of the construction of Thm.~\ref{T:randomized}.}
\label{fig:distr}
\end{center}
\end{figure}
We set the constraints so as to have the following properties: 1. the set of links at the same level is (almost) feasible, 2. the children affect the parent, 3. the grand-children do not affect their grandparent, 4. the parent does not affect the children. The first three constraints will hold if $\ell_0^{1-\delta}\ell_1^\delta/2 > 3\ell_1\log^b{n}$, which holds if $\delta < 1$ and the constants $c,d$ in (\ref{E:distrlengthratio}) are large enough. The fourth constraint requires: $\ell_0-\ell_0^{1-\delta}\ell_1^\delta > \ell_0/2$, which holds if $\ell_0 > 2^{1/\delta}\ell_1$. This completes the first step of the construction. At the second step we construct the children of level-$1$ links in a similar fashion, and continue this process until having $n/M$ links. The length ratios defined by (\ref{E:distrlengthratio}) will ensure that the construction is correct and, in particular, that no link can be in the same feasible set as any of its children. On the other hand, we prove that the affectance on any level-$t$ link by all other links, except level-$t-1$ and level-$t+1$ links, is bounded by a constant. This implies that the set of links constructed can be scheduled in a constant number of slots using the power scheme $P_{\delta}$. Let $L$ denote the set of links and $L_t$ denote the set of level-$t$ links for $t\ge 0$.
\begin{claim}
If the constants $c,d$ in (\ref{E:distrlengthratio}) are large enough, then for any level-$t$ link $i$ ($t\ge 0$), it holds that $a_{P_{\delta}}(S,i)=O(1)$, where $S=L\setminus(L_{t-1}\cup L_{t+1})$.
\end{claim}
\begin{proof}
First, note that $a_{P_{\delta}}(L_t, i)= O(1)$ because all the links in $L_t$ have equal lengths and are well separated from each other. 
Now, let us fix an $s>t+1$. The number of level-$s$ links is $|L_{s}|=\log^{sb}{n}$. The distance from each level-$s$ link to $r_i$ is at least $\ell_{t}^{1-\delta}\ell_{t+1}^{\delta}/2$, by construction. Thus, we have:
\[
a_{P_{\delta}}(L_s, i)\le |L_s|\frac{\ell_s^{\delta\alpha}\ell_t^{(1-\delta)\alpha}}{(\ell_{t}^{1-\delta}\ell_{t+1}^{\delta}/2)^\alpha}=2^\alpha|L_s|\left(\frac{\ell_{s}}{\ell_{t+1}}\right)^{\delta\alpha}.
\]
Since the number of levels is  $O(\log{n})$, it is enough to have $a(L_s, i) < \frac{1}{\log{n}}$, which is provided if
\[
\ell_{t+1}>2^{1/\delta}(|L_s|\log{n})^{1/(\delta\alpha)}\ell_s=2^{1/\delta} \log^{(sb+1)/(\alpha\delta)}{n}\ell_s. 
\]
The last inequality holds if we set $d\ge 2b/(\alpha\tau)$ and $c\ge 2^{1/\delta}$ in (\ref{E:distrlengthratio}).
Now, let us consider a layer $s < t-1$ for $t>0$. Recall that the distance from each link of $L_s$ to $r_i$ is at least $\ell_s/2$, by construction. The affectance of $L_s$ can be bounded as follows:
\[
a_{P_{\delta}}(L_s, i) \leq |L_s|\frac{\ell_s^{\delta\alpha}\ell_t^{(1-\delta)\alpha}}{(\ell_s/2)^{\alpha}} < 2^\alpha|L_s|\left(\frac{\ell_t}{\ell_s}\right)^{\delta\alpha}.
\]
Hence, we can easily get $a_{P_{\delta}}(L_s, i) < \frac{1}{\log{n}}$ by tuning the constants $c$ and $d$ in~(\ref{E:distrlengthratio}). This yields the claim.
\end{proof}
Thus, the set $L$ can be scheduled into a constant number of feasible slots, by considering e.g. the union of the odd-numbered levels and the union of the even-numbered levels separately. In order to complete the construction, we replace each link in $L$ with its $M$ identical copies. Let $L'$ denote this set of links. Note that $|L'|=n$ and the optimum scheduling number of $L'$ w.r.t. $P_{\delta}$ is $\Theta(M)$. 

It remains to prove that for any sequence $p_i$, $i=1,2\dots$ and $p_i\in (0,1)$, the randomized algorithm using the probabilities $p_i$ will schedule $L'$ in $\Omega(kM)=\Omega(M\log{n}/\log{\log{n}})$ slots. To that end, it will be more convenient to analyze the algorithm in terms of the conflict graph $G$ corresponding to $L'$, rather than the set of links itself. Note that the graph $G$ is constructed by replacing each vertex of a complete $\log^{b}{n}$-ary tree on $n/M$ vertices with a $M$-clique, where the cliques corresponding to two adjacent vertices form a $2M$-clique. Obviously, $\chi(G)=2M$. Level-$t$ vertices in $G$ are the vertices corresponding to level-$t$ vertices in the tree.  Let the probabilities $p_i$, $i=1,2,\dots$ be fixed. We consider the following variant of the algorithm with relaxed constraints on transmissions. In round $i$, each remaining vertex $v$ of $G$ selects itself with probability $p_i$ and is removed from the graph in this round if it selects itself and no neighbor is selected.  Let $T_t$ denote the first time step when the size of a level-$t$ $M$-clique is halved. Let $H_t$ denote the event that the size of the smallest level-$s$ clique is at least $(1-1/\log{n})M$ before iteration $T_{t+1}+1$, for any $s\le t$.
\begin{claim}
Consider $0\le t< k$. Suppose that $T_{t+1} < M\log{n}$. Then $\mathbb{P}[H_t] = 1-O(n^{-\frac{M}{130\log{n}}+1})$.
\end{claim}
\begin{proof}
Let us consider any fixed $s\le t$. Let $R_{v,i}$ denote the event that a level-$s$ vertex $v$ is removed in iteration $i\le T_{t+1}$. Then we have:
$
\mathbb{P}[R_{v,i}]=p_i(1-p_i)^{d(v)}\le p_ie^{-p_id(v)}\le 1/d(v)\le 2/(M\log^b{n}),
$
where the first inequality follows because $1-x\le e^{-x}$ for $x\in (0,1)$ (here, $x=p_i$), the second one follows because $e^x\ge x$ for all $x\ge 0$ (here, $x=p_id(v)$), and the third one follows because $d(v)\ge M\log^b{n}/2$ before $T_{i+1} +1$. By the union bound, we have: $\mathbb{P}[\cup_{i=1}^{T_{t+1}}{R_{v,i}}] \le 2/\log^{b-1}{n}$, because $T_{t+1}\le M\log{n}$. Let $K$ be a level-$s$ clique and let $K_{T_{t+1}}$ denote the set of nodes in $K$ that survived the first $T_{t+1}$ rounds and let $\bar{K}_{T_{t+1}}=K\setminus K_{T_{t+1}}$. Then, by the argument above, we have that $\mathbb{E}[|\bar{K}_{T_{t+1}}|]\le 2M/\log^{b-1}{n}$ and $\mu=\mathbb{E}[|K_{T_{t+1}}|] \ge (1-2/\log^{b-1}{n})M > 9M/10$ if $n$ is large enough. By a standard Chernoff bound with $\delta=1/6$,
\[
\mathbb{P}[|K_{T_{t+1}}| < 3M/4] \le \mathbb{P}[|K_{T_{t+1}}| < (1-\delta)\mu] < e^{-\delta^2\mu/3} <e^{-\frac{9M/10}{108}}=O(n^{-\frac{M}{130\log{n}}}).
\]
The claim now follows by the union bound, as there are at most $n$ cliques.
\end{proof}
Observe that given the event $H_t$, the difference between the times $T_{t+1}$ and $T_t$ is at least $M/4$ if $n$ is large enough. Indeed, $H_t$ implies that in round $T_{t+1}$, the size of each clique in levels $t, t-1, \dots,0$ is at least $3M/4$, and in order for a clique of size $3M/4$ to become less than $M/2$, at least $M/4$ rounds must pass. Thus, $\mathbb{P}[T_t - T_{t+1}\ge M/4]\ge \mathbb{P}[H_t]=1-O(n^{-\frac{M}{130\log{n}}+1})$ holds for each fixed $t$. By the union bound, the probability that the event ${T_{t+1} - T_t \ge M/4}$ is violated for at least one $t$ is at most $O(k\cdot n^{-\frac{M}{130\log{n}}+1})=O(n^{-\frac{M}{130\log{n}}+2})$. Thus, if $M>130c\log{n}$, then with probability $1-O(n^{2 - c})$, it will take at least $k\cdot M/4=\Omega(M\log{n}/\log{\log{n}})$ steps until all the vertices of the graph are removed. This completes the proof, also taking into account that $\log{\Delta}=\Theta(\log{n})$.
\end{proof}

\section{Approximations Based on Conflict Graphs}\label{sec:graphapprox}

The main result of this section is a $O(\log{\log{\Delta}})$-approximation algorithm for {\scheduling} and {\wcapacity} based on the conflict graph method introduced in~\cite{us:stoc15}. Conflict graphs are graphs defined over the set of links. Let us call a conflict graph $A(L)$ an \emph{upper bound graph} for a set $L$, if there is a power scheme $P_{\tau}$ such that each independent set in $A(L)$ is $P_{\tau}$-feasible. Similarly, we call a graph $B(L)$ a \emph{lower bound graph} for $L$ if each feasible set induces an independent set in $B(L)$. Note that the chromatic numbers of $A(L)$ and $B(L)$ give upper and lower bounds for {\scheduling} with oblivious power schemes. Moreover, if the vertex coloring problem for $A(L)$ can be efficiently approximated, then the upper bound is constructive. Now, our aim is to construct upper and lower bound graphs, such that the gap between their chromatic numbers is bounded. The less the gap, the better colorings of $A(L)$ approximate {\scheduling} with oblivious power. The outline of this section is as follows. First, we present a family of conflict graphs introduced in~\cite{us:stoc15} and point out a sub-family of lower bound graphs $\cG_{\gamma}(L)$. Next, we present a family of upper bound graphs $\cG_{\gamma}^\delta(L)$ and show that the gap between the chromatic numbers is $O(\log{\log{\Delta}})$. A combination of these results then yields our main result.

\mypara{Conflict Graphs} Let $f:\mathbb{R}_+\rightarrow \mathbb{R}_+$ be a positive function. 
Two links $i,j$ are said to be \emph{$f$-independent} if
  \[ \frac{d(i,j)}{l_{min}} > f\left(\frac{l_{max}}{l_{min}}\right), \]
where $l_{min}=\min\{l_i,l_j\},l_{max}=\max\{l_i,l_j\}$, and otherwise they are \emph{$f$-conflicting}. 
A set of links is $f$-independent if they are pairwise $f$-independent.

Given a set $L$ of links, $\cG_f(L)$\label{G:gf} denotes the graph with vertex set $L$, where two vertices $i,j\in L$ are adjacent if and only if they are $f$-conflicting.

We will be particularly interested in conflict graphs $\cG_f$ with $f(x)\equiv \gamma$ and $f(x)=\gamma\cdot x^{\delta}$ for constants $\gamma>0$ and $\delta\in (0,1)$. We will use the notation $\cG_{\gamma}$\label{G:ggamma} in the former case and the notation $\cG_{\gamma}^{\delta}$\label{G:ggammadelta} in the latter case. We will refer to independence (conflict) in $\cG_{\gamma}$ as $\gamma$-independence ($\gamma$-conflict, resp.) and to independence (conflict) in $\cG_{\gamma}^{\delta}$ as $(\gamma,\delta)$-independence ($(\gamma,\delta)$-conflict, resp.). Note that $\cG_{\gamma}$ is equivalent to $\cG_{\gamma}^0$.

It will be useful to note that two links $i,j$ with $l_i\ge l_j$ are $\gamma$-independent iff $d(i,j)>\gamma l_j$ and are $(\gamma, \delta)$-independent iff $d(i,j) > \gamma l_i^{\delta}l_j^{1-\delta}$.

There are several important properties of conflict graphs $\cG_{\gamma}^\delta$ that we will use (see~\cite{us:stoc15} for the proofs). These properties hold in metrics of constant doubling dimension. We list the properties together with brief explanations. The additional definitions are only needed to understand how the results of~\cite{us:stoc15} are adapted for our conflict graphs.

The first property is that constant factor changes of the parameter $\gamma$ affect the chromatic number of $\cG_{\gamma}^\delta$ by at most constant factors.
Let $\chi(G)$ denote the chromatic number of a graph $G$\label{G:chi}.
\begin{theorem}\label{T:constantsens}
For any set $L$ and constants $\gamma,\gamma'>0$ and $\delta\in (0,1)$, $\chi(\cG_{\gamma}^\delta(L))=\Theta(\chi(\cG_{\gamma'}^\delta(L)))$.
\end{theorem}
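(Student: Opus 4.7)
Without loss of generality assume $\gamma \le \gamma'$, so $(\gamma',\delta)$-independence implies $(\gamma,\delta)$-independence and hence $\cG_\gamma^\delta(L)$ is a subgraph of $\cG_{\gamma'}^\delta(L)$; this immediately yields the easy direction $\chi(\cG_\gamma^\delta(L)) \le \chi(\cG_{\gamma'}^\delta(L))$. For the other direction, the plan is to take an optimal coloring of $\cG_\gamma^\delta(L)$ and refine each color class. Since every color class $V$ is by definition $(\gamma,\delta)$-independent, it suffices to show that any such $V$ can be partitioned into $C = C(\gamma,\gamma',\delta,m)$ many $(\gamma',\delta)$-independent subsets, i.e. $\chi(\cG_{\gamma'}^\delta[V]) = O(1)$. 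I would establish this by showing that $\cG_{\gamma'}^\delta[V]$ is $O(1)$-degenerate, so that a greedy algorithm in reverse-removal order uses $O(1)$ colors.

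For degeneracy, consider any subgraph $H \subseteq \cG_{\gamma'}^\delta[V]$ and let $i$ be a shortest link in $H$; every neighbor $j$ of $i$ in $H$ then has $l_j \ge l_i$ and $d(i,j) \le \gamma' l_j^\delta l_i^{1-\delta}$. Bucket these neighbors by length into $N_p = \{j : l_j/l_i \in [2^p, 2^{p+1})\}$ for $p \ge 0$ and bound the degree via two ingredients. \emph{Within one class}, the closer endpoint of each $j\in N_p$ to $i$ lies in a ball of radius $O(2^{p\delta} l_i)$ about some node of $i$, while $(\gamma,\delta)$-independence forces pairwise distances inside $N_p$ to exceed $\gamma\cdot 2^p l_i$; the doubling-dimension packing bound then yields $|N_p| = O((\gamma'/\gamma)^m \cdot 2^{-p(1-\delta)m}) = O(1)$ for every $p$. \emph{Across classes}, for $j\in N_p$, $j'\in N_{p'}$ with $p<p'$, a triangle-inequality chain through two nodes of $i$ gives $d(j,j') = O(2^{p'\delta} l_i)$, whereas $(\gamma,\delta)$-independence demands $d(j,j') > \gamma\cdot 2^{p'\delta + p(1-\delta)} l_i$; the two constraints are compatible only if $2^{p(1-\delta)} = O(\gamma'/\gamma)$, i.e. $p \le p_0$ for a constant $p_0 = O(\log(\gamma'/\gamma)/(1-\delta))$. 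Consequently at most one length class with index exceeding $p_0$ can be nonempty, and summing per-class bounds over the $O(1)$ nonempty classes yields $\deg_H(i) = O(p_0 \cdot (\gamma'/\gamma)^m) = O(1)$.

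The main obstacle I anticipate is executing these two steps cleanly in a general doubling metric rather than on the line: one has to carefully track which endpoint of each $j$ serves as the ``closer'' node relative to $s_i$ or $r_i$, absorb the additive $l_i$ terms against the leading $2^{p\delta} l_i$ and $2^p l_i$ terms without inflating constants, and ensure that the two uses of $(\gamma,\delta)$-independence — once within $N_p$, once across classes — are both sharp enough. Once the constant $C$ extracted from this analysis is verified to depend only on $\gamma,\gamma',\delta,m$, the reverse inequality $\chi(\cG_{\gamma'}^\delta(L)) \le C\cdot \chi(\cG_\gamma^\delta(L))$ follows and the theorem is proved. The role of $\delta \in (0,1)$ is exactly to guarantee both the geometric decay $2^{-p(1-\delta)m}$ inside classes and the solvability of the cross-class inequality for a finite $p_0$; the argument visibly degenerates as $\delta \to 1$.
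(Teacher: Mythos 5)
Your proof is correct. The paper itself gives no proof of this theorem---it is imported from the cited conflict-graph paper \cite{us:stoc15}---but your argument (refine each $(\gamma,\delta)$-independent color class by a greedy/degeneracy coloring in decreasing length order, bound the same-length-class neighbors of the currently shortest link by a doubling-metric packing argument, and use cross-class $(\gamma,\delta)$-independence to cap the number of nonempty length classes at $p_0=O(\log(\gamma'/\gamma)/(1-\delta))$) is essentially the route taken there, and the delicate points you flag (absorbing the additive $l_i$ into the ball radius, and the ``at most one nonempty class beyond $p_0$'' step) all go through as you describe.
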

A \emph{$k$-simplicial elimination order} of graph $G$ is an arrangement of the vertices from left to right  where for each vertex, the set of neighbors appearing to its right can be covered with k cliques. A graph is $k$-simplicial if it has a $k$-simplicial elimination order. It is known that vertex coloring and  maximum weighted independent set problems are $k$-approximable in $k$-simplicial graphs~\cite{ackoglu, kammertholey, yeborodin}. The second property is: graphs $\cG_{f}$ with appropriate function $f$ are constant-simplicial~\cite{us:stoc15}.
\begin{theorem}\label{T:algorithms}
The vertex coloring and maximum weighted independent set problems are constant factor approximable in graphs $\cG_{\gamma}^\delta(L)$ with $\delta\in (0,1)$.
\end{theorem}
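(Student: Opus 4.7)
The plan is to reduce the theorem to the black-box result cited in the paragraph just above it, namely that vertex coloring and maximum weighted independent set are $k$-approximable in $k$-simplicial graphs~\cite{ackoglu, kammertholey, yeborodin}. Concretely, I would show that for any $\gamma > 0$ and $\delta \in (0,1)$, the graph $\cG_\gamma^\delta(L)$ admits a $k$-simplicial elimination order with $k = k(m,\gamma,\delta)$ constant, extending the argument of~\cite{us:stoc15} for the case $\delta = 0$.

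The elimination order is the natural one: order the links by non-decreasing length, breaking ties arbitrarily. Fix a link $i$ of length $\ell$; its later neighbors are links $j$ with $l_j \ge \ell$ and $d(i,j) \le \gamma l_j^\delta \ell^{1-\delta}$. Partition these later neighbors into length classes $L_k = \{j : 2^k\ell \le l_j < 2^{k+1}\ell\}$ for $k \ge 0$. Every link in $L_k$ has some endpoint within distance $R_k := \gamma (2^{k+1}\ell)^\delta \ell^{1-\delta} = O(\gamma\, 2^{k\delta}\ell)$ of some endpoint of $i$, so the relevant endpoints of $L_k$ lie in the union of two balls of radius $R_k$ centered at $s_i$ and $r_i$.

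Next, invoking the doubling dimension $m$, I cover this region with $O((R_k/\rho_k)^m)$ balls of radius $\rho_k := \gamma\, 2^k\ell / 2$, and assign each link of $L_k$ to the small ball containing its closer endpoint. Two links $j_1, j_2 \in L_k$ sharing a small ball satisfy $d(j_1, j_2) \le 2\rho_k = \gamma\, 2^k\ell \le \gamma\, l_{\min}^{1-\delta}l_{\max}^\delta$, using $l_{\max}^\delta l_{\min}^{1-\delta} \ge l_{\min} \ge 2^k\ell$ since the weighted geometric mean lies above $l_{\min}$. Hence each small ball induces a clique in $\cG_\gamma^\delta(L)$. The number of cliques produced from $L_k$ is $O((R_k/\rho_k)^m) = O(2^{km(\delta-1)})$, and summing over $k \ge 0$ gives the geometric series
\[
\sum_{k \ge 0} 2^{km(\delta-1)} \;=\; \frac{1}{1 - 2^{m(\delta-1)}},
\]
which converges to a constant precisely because $\delta < 1$ (and $m$ is constant in a doubling metric). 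Thus the later neighbors of $i$ are covered by $O(1)$ cliques, and $\cG_\gamma^\delta(L)$ is constant-simplicial.

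The main obstacle is the geometric step: making sure that the growth $R_k \sim 2^{k\delta}$ of the spatial extent of conflicting longer links is strictly slower than the growth $\rho_k \sim 2^k$ of the admissible clique diameter. The gap is governed exactly by $1-\delta$, and it is this gap that drives the geometric convergence of the clique count; the bound would collapse at $\delta = 1$, where the spatial radius and clique diameter would grow at the same rate and one could pack unboundedly many mutually non-conflicting longer links into the neighborhood of $i$. This is consistent with the theorem's hypothesis $\delta \in (0,1)$. Once the constant-simplicial property is established, the constant-factor approximations for vertex coloring and maximum weighted independent set follow immediately from the cited results on $k$-simplicial graphs.
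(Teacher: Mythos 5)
Your high-level route is the same one the paper takes: Theorem~\ref{T:algorithms} follows by showing $\cG_{\gamma}^{\delta}(L)$ is $k$-simplicial for constant $k$ (a fact the paper itself delegates to~\cite{us:stoc15}) and then invoking the $k$-approximation algorithms for $k$-simplicial graphs. Your elimination order and the per-length-class ball packing are the right ingredients, and the clique verification inside a small ball is correct. The gap is in the final count. The covering number $O((R_k/\rho_k)^m)=O(2^{km(\delta-1)})$ is only a valid bound when it is at least $1$: a nonempty class $L_k$ always costs at least one clique (indeed at least one per ball center), and the later neighbors of a link $i$ of length $\ell$ can populate $\Theta(\log\Delta)$ distinct length classes (place, for each $k$, one link of length $2^k\ell$ essentially on top of $i$; each conflicts with $i$). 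Since every clique in your cover lives inside a single class, the cover actually has size $\sum_k \max\bigl(1, O(2^{km(\delta-1)})\bigr)=\Theta(\log\Delta)$, not $O(1)$; the geometric series you exhibit silently rounds the tail terms below $1$ down to zero. As written, the argument only yields an $O(\log\Delta)$-simplicial bound, which is not enough for the theorem.

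The missing idea is that the tail classes need no packing at all: all later neighbors of $i$ of length at least $C\ell$, with $C=(2+1/\gamma)^{1/(1-\delta)}$, pairwise conflict and hence form a \emph{single} clique. Indeed, if $\ell\le l_{j_1}\le l_{j_2}$ and both $j_1,j_2$ conflict with $i$, then by the triangle inequality $d(j_1,j_2)\le d(j_1,i)+\ell+d(i,j_2)\le (2\gamma+1)\,l_{j_2}^{\delta}\ell^{1-\delta}$, and this is at most $\gamma\, l_{j_2}^{\delta} l_{j_1}^{1-\delta}$ as soon as $l_{j_1}^{1-\delta}\ge (2+1/\gamma)\,\ell^{1-\delta}$. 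Thus only the $O(\log C)=O_{\gamma,\delta}(1)$ classes with $2^k<C$ require your ball-packing bound, each contributing $O_{\gamma,\delta,m}(1)$ cliques, and together with the one tail clique this gives the constant-simplicial property; the rest of your reduction then goes through. Note this is also where $\delta<1$ truly enters (the constant $C$ blows up as $\delta\to 1$), rather than through the convergence of the geometric series.
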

Let $f$ be a sub-linear function. For each integer $c\ge 1$, the function $f^{(c)}(x)$\label{G:frepeated} is defined recursively by: $f^{(1)}(x)=f(x)$ and $f^{(c)}(x)=f(f^{(c-1)}(x))$ for $c>1$. Let $x_0=\inf\{x\ge 1, f(x)<x\}+1$; such a point exists for an appropriate sub-linear function. The function \emph{iterated} $f$, denoted $f^*(x)$\label{G:fstar}, is defined by:
\[
f^*(x)=\begin{cases}
\min_c\{f^{(c)}(x)\le x_0\},&\text{ if }x>x_0,\\
1, &\text{ otherwise}.
\end{cases}
\]
For $f(x)=x^\delta$, $f^*(x)$ is the minimum number of times $f$ should be repeatedly applied on $x$ in order to get the value below $2$. Thus, in this case $f^*(x)=\lceil\log_{\delta}{\log{x}}\rceil$.

The third property is: the chromatic numbers of $\cG_{\gamma}$ and $\cG_{\gamma f}$ are at most a factor of $O(f^*(\Delta))$ apart, which gives the gap of $O(\log{\log{\Delta}})$ when $f(x)=x^\delta$.
\begin{theorem}\label{T:chromaticgap}
$\chi(\cG_{\gamma}^\delta(L))=\chi(\cG_{\gamma}(L)) \cdot O(\log{\log{\Delta}})$ for any constants $\gamma >0$ and $\delta\in (0,1)$.
\end{theorem}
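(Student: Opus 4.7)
The plan is to build an ascending chain of conflict graphs interpolating between $\cG_{\gamma}(L)$ and $\cG_{\gamma}^\delta(L)$ of length $O(\log\log\Delta)$, and to show an additive per-step increase in chromatic number.

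Setup. I would normalize so that $l_{\min}=1$ and $l_{\max}=\Delta$. Since $(l_{\max}/l_{\min})^\delta\ge 1$, every $\gamma$-conflict is automatically a $(\gamma,\delta)$-conflict, giving $\cG_\gamma(L)\subseteq\cG_\gamma^\delta(L)$ and the trivial direction $\chi(\cG_\gamma(L))\le\chi(\cG_\gamma^\delta(L))$. The $O(\log\log\Delta)$ factor is suggested by the fact that for $f(x)=x^\delta$, $f^{(K)}(\Delta)=\Delta^{\delta^K}$ falls below any fixed constant $x_0>1$ as soon as $K=O(\log\log\Delta)$.

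Chain and telescoping. I define $G_t:=\cG_{g_t}(L)$ for $t=0,1,\dots,K$ with $g_t(x):=\gamma\min(x^\delta,\theta_t)$, where $\theta_0=x_0$ and $\theta_{t+1}=\theta_t^{1/\delta}$; then $\theta_K\ge\Delta^\delta$ for $K=O(\log\log\Delta)$, so $G_0$ coincides with $\cG_\gamma$ up to an absolute constant that can be absorbed via Theorem~\ref{T:constantsens}, while $G_K$ is exactly $\cG_\gamma^\delta$. The chain satisfies $G_0\subseteq G_1\subseteq\cdots\subseteq G_K$. The goal is to prove the additive per-step bound
\[
\chi(G_{t+1}(L))\le\chi(G_t(L))+C\cdot\chi(\cG_\gamma(L))
\]
for an absolute constant $C$, so that telescoping over $t=0,\dots,K-1$ yields $\chi(\cG_\gamma^\delta(L))\le(1+CK)\chi(\cG_\gamma(L))=O(\log\log\Delta)\cdot\chi(\cG_\gamma(L))$.

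Per-step refinement. Fix proper colorings of $G_t$ and of $\cG_\gamma$. For each $G_t$-color class $C$, which is in particular $\gamma$-independent since $\cG_\gamma\subseteq G_t$, the new edges appearing at level $t+1$ only join pairs $(i,j)\in C$ whose length ratio $R=l_{\max}/l_{\min}$ exceeds $\theta_t^{1/\delta}$, with distance $d(i,j)$ in the window $(\gamma\theta_t l_{\min},\gamma\min(R^\delta,\theta_{t+1})l_{\min}]$. I would bound the chromatic number of this "new-edge" subgraph on $C$ by $O(\chi(\cG_\gamma(L)))$ by (i) grouping potential partners of each link $i\in C$ by length scale, (ii) using the doubling-dimension bound on the number of pairwise $\gamma$-separated links of a fixed length scale inside the relevant ball around $i$, and (iii) covering the resulting locally bounded structure with a constant number of copies of the given $\cG_\gamma$-coloring, so that the total new-edge contribution to the chromatic number at this step is absorbed into $C\cdot\chi(\cG_\gamma(L))$.

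Main obstacle. The delicate point is keeping the constant $C$ absolute — independent of $t$ and of $\Delta$. As $t$ grows, the thresholds $\theta_t$ explode doubly exponentially, so the relevant length bands around each link widen and the packing bound in the metric must be carefully balanced against the number of length scales that can contribute at that level. This is exactly where the fading (bounded doubling dimension) property of the metric is essential, and indeed the theorem provably fails for general or tree metrics as discussed in Sec.~\ref{sec:limitation}. A conceptually cleaner alternative would be to exploit the constant-simplicial structure from Theorem~\ref{T:algorithms}: there $\chi$ is within a constant factor of the clique number $\omega$ for both graphs, and so the task reduces to the purely geometric inequality $\omega(\cG_\gamma^\delta(L))=O(\log\log\Delta)\cdot\omega(\cG_\gamma(L))$, which one may attempt to prove by a direct length-bucketing and packing argument on a maximum $(\gamma,\delta)$-clique.
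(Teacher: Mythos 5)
First, a point of reference: the paper does not prove Theorem~\ref{T:chromaticgap} in-line; it imports it from~\cite{us:stoc15}, where the general form states that $\chi(\cG_{\gamma f})$ and $\chi(\cG_{\gamma})$ differ by a factor $O(f^*(\Delta))$ for sublinear $f$. Your iteration count $K=O(\log\log\Delta)$ for $f(x)=x^\delta$ is exactly that $f^*(\Delta)$, so the architecture of your chain $G_0\subseteq\cdots\subseteq G_K$ targets the right quantity, and the bookkeeping at the two ends ($G_0$ absorbed into $\cG_{\gamma}$ via Theorem~\ref{T:constantsens}, $G_K=\cG_{\gamma}^{\delta}$) is fine, as is the easy direction $\chi(\cG_{\gamma})\le\chi(\cG_{\gamma}^{\delta})$.

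The gap is that the per-step bound $\chi(G_{t+1})\le\chi(G_t)+C\cdot\chi(\cG_{\gamma}(L))$ with an absolute constant $C$ --- which is the entire mathematical content of the theorem --- is asserted rather than proved, and the sketch in ``Per-step refinement'' does not close as described. Fix a $G_t$-color class $S$ and a short link $j\in S$; its new-edge partners are longer links $i$ with $l_i>\theta_{t+1}l_j$ lying in the annulus $\gamma\theta_t l_j<d(i,j)\le\gamma\theta_{t+1}l_j$. Grouping these partners by dyadic length scale and applying the doubling-dimension packing bound gives $O(1)$ partners per scale, but the number of contributing scales is controlled only by $\log\Delta$; nothing in the sketch converts this into a cover by $O(1)$ copies of a $\cG_{\gamma}$-coloring, and indeed the new-edge partners of $j$ need not be $\gamma$-conflicting with $j$ at all, so the given $\cG_{\gamma}$-coloring offers no direct handle on them. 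The clique-number alternative inherits the same hole: inside an iterated-$f$ length bucket $[x_0^{(1/\delta)^{k-1}},x_0^{(1/\delta)^{k}})$ of a $(\gamma,\delta)$-clique the length ratios are still doubly-exponentially large, so such a bucket is not a clique of any constant-threshold graph $\cG_{\gamma'}$, and the inequality $\omega(\cG_{\gamma}^{\delta}(L))=O(\log\log\Delta)\cdot\omega(\cG_{\gamma}(L))$ is essentially a restatement of the theorem rather than a reduction of it. You have correctly located where the difficulty sits, but the proposal does not yet contain the argument that resolves it; as written, the proof is incomplete.
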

The fourth property shows that for appropriate constant $\gamma>0$, $\cG_{\gamma}(L)$ is a lower bound graph, i.e. each feasible subset of $L$ is an independent set in $\cG_{\gamma}(L)$.
\begin{theorem}\label{T:lowerbound}
If $\beta>1$ then there is a constant $\gamma>0$ s.t. each feasible set is $\gamma$-independent, i.e. $\cG_{\gamma}(L)$ is a lower bound graph.
\end{theorem}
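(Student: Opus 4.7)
The plan is to prove the contrapositive at the pair level: if two links $i, j$ (with $l_j \le l_i$, say) satisfy $d(i,j) \le \gamma l_j$, then the pair $\{i,j\}$ is infeasible under every power assignment, provided $\gamma$ is a sufficiently small positive constant depending on $\beta$ and $\alpha$. Since feasibility is a pairwise-closed property, this suffices.

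The first step is to extract a power-independent necessary condition for feasibility. For any power assignment $P$ that makes $\{i,j\}$ feasible, we have $a_P(j,i)\le 1/\beta$ and $a_P(i,j)\le 1/\beta$. Multiplying these two inequalities lets the terms $P(i)$ and $P(j)$ cancel, leaving
\[
\frac{l_i^\alpha l_j^\alpha}{d_{ij}^\alpha d_{ji}^\alpha} \le \frac{1}{\beta^2},
\qquad\text{i.e.,}\qquad
d_{ij}\cdot d_{ji} \ge \beta^{2/\alpha}\, l_i l_j.
\]
This is the only place where $\beta>1$ is used: it makes the right-hand side strictly larger than $l_i l_j$, which will collide with the upper bound developed next.

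The second step is to upper bound $d_{ij}\cdot d_{ji}$ under the assumption $d(i,j)\le\gamma l_j$, by case-analysis on which of the four node pairs achieves $d(i,j)$. If the minimum is attained by a \emph{same-type} pair, $(s_i,s_j)$ or $(r_i,r_j)$, the triangle inequality yields $d_{ij},d_{ji}\le (1+\gamma) l_{\max}$ and $(1+\gamma) l_{\min}$ in one order or the other, giving $d_{ij}d_{ji}\le (1+\gamma)^2 l_i l_j$. If the minimum is attained by a \emph{cross} pair $(s_i,r_j)$ or $(r_i,s_j)$, one of $d_{ij},d_{ji}$ equals $d(i,j)\le\gamma l_j$ directly, and the other is at most $l_i+(1+\gamma)l_j\le(2+\gamma)l_i$ by the triangle inequality, giving the smaller product $\gamma(2+\gamma) l_i l_j$. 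For $\gamma\le 1$ one checks $\gamma(2+\gamma)\le (1+\gamma)^2$, so the same-type case dominates.

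Combining the two steps yields $(1+\gamma)^2 \ge \beta^{2/\alpha}$, i.e., $\gamma \ge \beta^{1/\alpha}-1$. Choosing $\gamma := \tfrac{1}{2}(\beta^{1/\alpha}-1)$, which is strictly positive because $\beta>1$, produces a contradiction, proving the pair $\{i,j\}$ infeasible and hence every feasible set $\gamma$-independent. The only mildly subtle point is the case analysis in the second step, where one must resist the naive bound $d_{ij}\le l_i+l_j+d(i,j)$ and instead exploit the fact that when the closest pair is same-type, each of $d_{ij}$ and $d_{ji}$ spans only one link's length (not both), which is precisely what keeps the product at $(1+\gamma)^2 l_i l_j$ rather than a larger quantity; this is where a positive $\gamma$ becomes available rather than only a trivial $\gamma=0$.
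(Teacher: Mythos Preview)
Your proof is correct. The power-cancellation trick (multiplying the two pairwise affectance constraints to obtain the power-independent inequality $d_{ij}\,d_{ji}\ge\beta^{2/\alpha}l_il_j$) followed by a triangle-inequality case analysis on which node pair realizes $d(i,j)$ is exactly the standard argument for this fact; it is essentially how the result is established in \cite{us:stoc15}, which is where the present paper defers the proof (see the sentence preceding Theorem~\ref{T:constantsens}: ``see~\cite{us:stoc15} for the proofs'').

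Two minor remarks. First, your inequality $\gamma(2+\gamma)\le(1+\gamma)^2$ actually holds for all $\gamma\ge 0$ (the difference is identically $1$), so the restriction $\gamma\le 1$ is unnecessary. Second, the paper's standing assumption $c_i=1$ is not essential for this direction: since $c_i\ge 1$ always, the product $a_P(i,j)\,a_P(j,i)$ can only increase when the $c$-factors are reinstated, so your lower bound $d_{ij}d_{ji}\ge\beta^{2/\alpha}l_il_j$ remains valid for arbitrary power assignments with noise.
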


\mypara{Upper Bound Graphs}
Here we show that for appropriate values of $\delta$ and $\gamma$, graphs $\cG_{\gamma}^\delta(L)$ are upper bound graphs, i.e. each independent set in $\cG_{\gamma}^\delta(L)$ is feasible with the appropriate oblivious power assignment. This complements the conflict graph framework for approximating oblivious power feasibility described in the beginning of the section, leading to an $O(\log{\log{\Delta}})$-approximation.

The general proof idea is borrowed from~\cite{us:stoc15}. Namely, in order to bound the affectance of an independent set $S$ of links on a given link $i$, we first split $S$ into length classes (i.e. equilength subsets) and bound the affectance of each length class on $i$ separately (lemmas~\ref{L:mainlemma1} and~\ref{L:mainlemma2}). Then we combine the obtained bounds in a series that converges under the assumption that the links are in a fading metric (Thm.~\ref{T:obliviouspowers}). The affectance of each length class on link $i$ is bounded using the common ``concentric annuli'' argument (Lemma~\ref{L:garbage}), where the rough idea is to split the space into concentric annuli centered at an endpoint of $i$, bound the number of links in each annulus using link independence and the doubling property of the space, then use these bounds to bound the affectances of links from different annuli and combine them into a series that converges by the properties of the space. The difference from the proofs of~\cite{us:stoc15} is that here we have to deal with both long and short links, while in~\cite{us:stoc15} we had to consider only the influence of shorter links.

We will obtain a slightly stronger result than feasibility. Our results hold in terms of the function $f_{\tau}(i,j)$ with a parameter $\tau \in [0,1]$, where for any two links $i,j$,
$
 f_{\tau}(i,j)= \frac{l_i^{\tau\alpha}l_j^{(1-\tau)\alpha}}{d(i,j)^\alpha}
 $ (note that we have $d(i,j)$ instead of $d_{ij}$ in the denominator).
 Note that for any pair of links $i,j$,
$
a_{P_{\tau}}(i,j)\leq f_{\tau}(i,j).
$ The function $f_{\tau}(i,j)$ is extended additively to sets of links, similar to the function $a_{P}(i,j)$.

In the following core lemma, we show that the affectance of an independent equilength set of links $S$ (i.e. $\Delta(S)\le 2$) on a separated fixed link $i$ can be bounded by the ratio of the length $l_i$ and the minimum length in $S$. The proof is the ``concentric annuli'' argument described above.
 \begin{lemma}\label{L:garbage}
 Let $\delta, \tau \in (0,1)$ and $\gamma \ge 1$, let $S$ be an equilength set of $1$-independent links, and let $i$ be a link s.t. $i,j$ are $(\gamma,\delta)$-independent for all $j\in S$. 
Then,
 \[
 \displaystyle f_{\tau}(S,i)\in
 O\left(\gamma^{m-\alpha} \left(\frac{l_i}{\ell}\right)^{(1-\tau)\alpha - \delta(\alpha-m)}\cdot \min\left\{1,\frac{l_i}{\ell}\right\}^{-\delta} \right),
 \]
where $\ell$ denotes the shortest link length in $S$. 
 \end{lemma}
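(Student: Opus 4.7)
The plan is a ``concentric annuli'' argument in the style used elsewhere in this paper and in \cite{us:stoc15}. Set the reference separation $D_0 := \gamma \max(l_i,\ell)^\delta \min(l_i,\ell)^{1-\delta}$; by $(\gamma,\delta)$-independence we have $d(i,j)\ge D_0$ for every $j\in S$. Partition $S$ into layers $A_k := \{j\in S : 2^{k-1}D_0 \le d(i,j) < 2^k D_0\}$ for $k=1,2,\dots$, and estimate $f_\tau(S,i)=\sum_k \sum_{j\in A_k} f_\tau(j,i)$.

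The first step is to bound $|A_k|$ via the doubling property. Since $S$ is equilength with all lengths in $[\ell,2\ell]$ and $1$-independent, the senders $\{s_j:j\in S\}$ are pairwise at distance greater than $\ell$. For $j\in A_k$, some endpoint-endpoint pair of $i$ and $j$ realizes a distance below $2^k D_0$, so $s_j$ lies in a ball of radius $O(2^k D_0+l_i+\ell)$ around an endpoint of $i$. Applying the doubling dimension then gives $|A_k|=O(\max\{1,(2^k D_0/\ell)^m\})$. The second step is the per-term bound: using $l_j\le 2\ell$ and $d(i,j)\ge 2^{k-1}D_0$, the definition of $f_\tau$ yields $f_\tau(j,i)=O(\ell^{a} l_i^{b}/(2^k D_0)^\alpha)$ with exponents $a,b$ determined by $\tau$ and $\alpha$.

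The third step is the summation. Because the metric is fading ($m<\alpha$), the series in $k$ has ratio $2^{m-\alpha}<1$ and converges. When every annulus lies in the doubling regime $2^k D_0\ge\ell$---which holds whenever $l_i\ge\ell$, since then $D_0\ge\gamma\ell\ge\ell$---the sum is dominated by its leading term and gives $f_\tau(S,i)=O(D_0^{m-\alpha}\ell^{a-m}l_i^{b})$; substituting $D_0=\gamma l_i^\delta \ell^{1-\delta}$ and simplifying the exponents yields exactly the claimed bound with $\min\{1,l_i/\ell\}^{-\delta}=1$.

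The main obstacle is the short-$i$ regime $l_i<\ell$, where $D_0=\gamma \ell^\delta l_i^{1-\delta}$ can be smaller than $\ell$, so the inner annuli with $2^k D_0<\ell$ escape the doubling bound and only give $|A_k|=O(1)$. I would split the summation into these ``small'' annuli (handled by the trivial count) and the ``large'' annuli $2^k D_0\ge\ell$ (handled as in step three). The small block is dominated by its first summand and contributes a term of order $\ell^{a}l_i^{b}/D_0^{\alpha}$; after substituting the Case~B value of $D_0$ this absorbs into the Case~A answer multiplied by $(l_i/\ell)^{-\delta}$, matching the correction factor in the lemma. Carefully verifying this algebraic alignment between the small-annulus contribution and the $\min\{1,l_i/\ell\}^{-\delta}$ factor---and checking that the transition $k$ at $2^k D_0\sim\ell$ does not leak a polylogarithmic factor---is where I expect the main bookkeeping to concentrate.
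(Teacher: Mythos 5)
Your overall strategy is exactly the paper's: annuli around the endpoints of $i$, a doubling-dimension count per annulus, a per-link bound from the separation guarantee, and a convergent sum (the paper uses width-$\ell$ annuli plus summation by parts and an integral estimate rather than dyadic annuli, but that is cosmetic; your ``two-ball'' count corresponds to the paper's split of $S$ into the links nearer $r_i$ and those nearer $s_i$). One small repair: to get $|A_k|=O(\max\{1,(2^kD_0/\ell)^m\})$ you must use two balls of radius $2^kD_0+O(\ell)$ centered at $s_i$ and $r_i$ separately; a single ball ``around an endpoint of $i$'' needs radius $2^kD_0+l_i$ and would inject a spurious $(l_i/\ell)^m$ into the count for small $k$.

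The genuine gap is precisely the step you defer as bookkeeping: the algebraic alignment in the regime $l_i<\ell$ does not come out. Your (correct) separation there is $D_0=\gamma\ell^{\delta}l_i^{1-\delta}$, so the innermost nonempty annulus contributes $\Theta\bigl(\gamma^{-\alpha}(l_i/\ell)^{(\delta-\tau)\alpha}\bigr)$, while the target is $\gamma^{m-\alpha}(l_i/\ell)^{(1-\tau)\alpha-\delta(\alpha-m)-\delta}$. Since $l_i/\ell<1$, absorption requires $(\delta-\tau)\alpha\ge(1-\tau)\alpha-\delta(\alpha-m)-\delta$, i.e.\ $\delta\ge\alpha/(2\alpha-m+1)$; for smaller $\delta$ a \emph{single} link of $S$ placed at distance just above $D_0$ already exceeds the stated bound (take $m=1$, $\alpha=3$, $\delta=0.1$, $\tau=1/2$: that link contributes $\Theta\bigl((l_i/\ell)^{-1.2}\bigr)$ against a claimed $O\bigl((l_i/\ell)^{1.2}\bigr)$), so no reorganization of the small/large annulus blocks can rescue the full stated range $\delta\in(0,1)$. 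The root of the mismatch is that the correction factor $\min\{1,l_i/\ell\}^{-\delta}$ in the statement is calibrated to the separation $\gamma l_i^{\delta}\ell^{1-\delta}$ (exponent $\delta$ on the \emph{short} link $i$), which is what the paper's proof uses for both cases via $q=\gamma(l_i/\ell)^{\delta}$; but $(\gamma,\delta)$-independence puts the exponent $\delta$ on the \emph{longer} link, so that separation is only implied by independence when $\delta\ge1/2$, and even then the exponent you obtain from the tight $D_0$ differs from the stated one. To finish, you must either impose $\delta\ge\alpha/(2\alpha-m+1)$ (harmless for the paper's applications, which take $\delta>\delta_0>1/2$), or state the Case~B bound in the form your $D_0$ actually yields, namely $O\bigl(\gamma^{m-\alpha}(l_i/\ell)^{(1-\tau)\alpha-(1-\delta)(\alpha-m)}+\gamma^{-\alpha}(l_i/\ell)^{(\delta-\tau)\alpha}\bigr)$, and check that this suffices for Lemma~\ref{L:mainlemma2}.
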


\begin{proof}
We will use the following two facts. 
 \begin{fact}\label{F:convex}
 Let $\alpha\geq 1$ and $r\geq 0$ be real numbers. Then $ \frac{1}{ r^\alpha} - \frac{1}{(r+1)^\alpha}\leq \frac{\alpha}{(r+1)^{\alpha+1}}.$
 \end{fact}

\begin{fact}\label{F:integral}
Let $g(x) = \displaystyle\frac{1}{(q + x)^{\gamma}}$, where $\gamma > 1$ and $q>0$. Then $\sum_{r=0}^\infty {g(r)}\in \displaystyle O\left(\frac{1}{q^{\gamma-1}} + \frac{1}{q^{\gamma}}\right).$
\end{fact}

 First, let us split $S$ into two subsets $S'$ and $S''$ such that $S'$ contains the links of $S$ that are closer to $r_i$ than to $s_i$,
i.e. $S'=\{j\in S: \min\{d(s_j,r_i), d(r_j,r_i)\} \leq \min\{d(s_j,s_i), d(r_j,s_i)\}\}$ and $S''=S\setminus S'$. Let us consider the set $S'$ first.

 For each link $j\in S'$, let $p_j$ denote the  endpoint of $j$ that is closest to node $r_i$. Denote $q=\gamma l_i^{\delta}\ell^{-\delta}$.
Consider the subsets $S_1,S_2,\dots$ of $S'$, where 
 $
 S_r=\{j\in S': d(j,i)=d(p_j,r_i)\leq q \ell+(r-1)\ell\}.
 $
 Note that $S_1$ is empty: $d(j,i)>\gamma l_i^{\delta}\ell^{1-\delta}=q\ell$ for all $j\in S'$ because $i,j$ are $(\gamma, \delta)$-independent, and so $S'=\cup_{r=2}^\infty{S_r}$. Let us fix an $r>1$. 
Consider any two links $j,k\in S_r$ s.t. $l_j\geq l_k$. We have that
 $d(p_j,p_k) \ge d(j,k) > \ell$ ($1$-independence) and that $d(p_j,r_i)\leq \gamma q \ell+(r-1)\ell$ for each $j\in S_r$ (by the definition of $S_r$), so using the doubling property of the metric space, we get the following bound: 
\begin{equation}\label{E:strs}
|S_r|=|\{p_j\}_{j\in S_r}|\leq C\cdot \left(\frac{q\ell+(r-1)\ell }{\ell}\right)^{m} = C \left(q+r-1\right)^{m}.
\end{equation}
Note also that $l_j \leq 2\ell$ and $d(i,j) \ge q\ell+(r-2)\ell$ for any link $j\in S_r\setminus S_{r-1}$ with $r>1$; hence, 
\begin{equation}\label{E:feqs}
f_{\tau} (j, i) = \frac{l_j^{\tau\alpha} l_i^{(1 - \tau)\alpha}}{d(i,j)^\alpha}
 \leq \ell^{(\tau - 1)\alpha}l_i^{(1-\tau)\alpha}\left(\frac{2\ell}{q\ell+(r-2)\ell}\right)^\alpha
 =\frac{2^\alpha \ell^{(\tau - 1)\alpha}l_i^{(1-\tau)\alpha} }{\left(q+r-2\right)^\alpha}.
\end{equation}
Recall that $S_{r-1}\subseteq S_r$ for all $r>1$, $S_1=\emptyset$ and $S'=\cup_{r=2}^\infty{S_r}$. Using~(\ref{E:feqs}), we have:
\begin{align}
{f_{\tau}(S',i)} & = \sum_{r\geq 2}{\sum_{j\in S_r\setminus S_{r-1}}{f_{\tau}(j,i)}} \nonumber \\
& \leq \sum_{r\geq 2}{\left(|S_r|-|S_{r-1}|\right)\frac{2^\alpha \ell^{(\tau - 1)\alpha}l_i^{(1-\tau)\alpha}}{\left(q+r-2\right)^\alpha}} \nonumber \\
& = 2^\alpha \left(\frac{l_i}{\ell}\right)^{(1-\tau)\alpha} \sum_{r\geq 2}{|S_r|\left( \frac{1}{\left(q+r-2\right)^\alpha} - \frac{1}{\left(q+r-1\right)^\alpha} \right)},
\label{eq:fjibnd}
\end{align}
where the last equality is just a rearrangement of the sum.
The sum can be bounded as follows:
\begin{align*}
\sum_{r\geq 2} {|S_r|\left( \frac{1}{\left(q+r-2\right)^\alpha} - \frac{1}{\left(q+r-1\right)^\alpha} \right)}
& \leq \sum_{r\geq 2}{|S_r|\frac{\alpha}{(q+r-1)^{\alpha+1}}} \\
& \leq \sum_{r\geq 2}{\frac{C \alpha (q+r-1)^m}{(q+r-1)^{\alpha+1}}} \\
& = O\left(\sum_{r\geq 2}{\frac{1}{(q+r-1)^{\alpha-m+1}}}\right)\\
&  = O\left(\frac{1}{q^{\alpha - m}} + \frac{1}{q^{\alpha - m + 1}}\right) \\
&  = O\left(\frac{1}{(\gamma l_i^\delta \ell^{-\delta})^{\alpha - m}} + \frac{1}{(\gamma l_i^\delta \ell^{-\delta})^{\alpha - m + 1}}\right) \\
&  = O\left(\gamma^{m- \alpha }\left(\frac{\ell}{l_i}\right)^{\delta(\alpha - m)}\left(1 + \left(\frac{\ell}{l_i}\right)^{\delta}\right)\right),
\end{align*}
where the first line follows from Fact~\ref{F:convex}, the second one follows from (\ref{E:strs}) and the fourth one follows from Fact~\ref{F:integral}. 
Combined with (\ref{eq:fjibnd}), this completes the proof for the set $S'$. 

The proof holds symmetrically for the set $S''$. Recall that $S''$ consists of the links of $S$ which are closer to the sender $s_i$ than to the receiver $r_i$. Now, we can re-define $p_j$ to denote the endpoint of link $j$ that is closer to $s_i$, for each $j\in S''$. The rest of the proof will be identical, by replacing $r_i$ with $s_i$ in the formulas. This is justified by the symmetry of $(\gamma, \delta)$-independence.
\end{proof}

In the following two lemmas we bound the affectance of a set $L$ of independent links on a fixed link $i$ that is sufficiently separated from $L$. The two cases when $L$ consists of links longer than $i$ and shorter than $i$ are treated separately because they impose different conditions on parameters $\delta$ and $\tau$. The idea of the proof is to split $L$ into length classes, bound the affectance of each length class using Lemma~\ref{L:garbage} and then combine the obtained bounds in a geometric series that will be upper bounded, provided that $\delta$ and $\tau$ satisfy the conditions of the lemmas. 
 \begin{lemma}\label{L:mainlemma1}
 Let $L$ be a $1$-independent set of links and $i$ be a link s.t. $l_i\geq l_j$ and $i,j$ are $(\gamma,\delta)$-independent for all $j\in L$.  Then for each $ \tau > 1- \delta (1-m/\alpha)$,
 $
 f_{\tau} (L, i)= O\left(\gamma^{m-\alpha}\right).
 $
 \end{lemma}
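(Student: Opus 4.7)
The plan is to partition $L$ into equilength length classes and apply Lemma~\ref{L:garbage} to each one, then sum the resulting bounds as a geometric series whose convergence follows from the hypothesis on $\tau$.

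First I would define, for each integer $t\ge 0$, the length class
\[
L_t = \left\{ j\in L : \frac{l_i}{2^{t+1}} < l_j \le \frac{l_i}{2^{t}} \right\}.
\]
Since $l_j\le l_i$ for all $j\in L$, the classes $L_0, L_1, \dots$ cover $L$ (up to possibly a single shortest-length tail that can be absorbed into the last nonempty class). Each $L_t$ is equilength, $1$-independent (inherited from $L$), and every $j\in L_t$ is $(\gamma,\delta)$-independent of $i$ (inherited from the hypothesis), so Lemma~\ref{L:garbage} applies to each $L_t$ with the fixed outer link $i$. Let $\ell_t$ denote the shortest length in $L_t$; then $\ell_t > l_i/2^{t+1}$, so $l_i/\ell_t < 2^{t+1}$, and since $\ell_t\le l_i$ we have $\min\{1, l_i/\ell_t\} = 1$, which removes the second factor in the bound of Lemma~\ref{L:garbage}.

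Applying Lemma~\ref{L:garbage} therefore gives
\[
f_\tau(L_t, i) = O\!\left( \gamma^{m-\alpha} \left(\frac{l_i}{\ell_t}\right)^{(1-\tau)\alpha - \delta(\alpha - m)} \right) = O\!\left( \gamma^{m-\alpha}\cdot 2^{(t+1)\,\eta} \right),
\]
where $\eta := (1-\tau)\alpha - \delta(\alpha - m)$. The hypothesis $\tau > 1 - \delta(1 - m/\alpha)$ is exactly $(1-\tau)\alpha < \delta(\alpha - m)$, i.e. $\eta < 0$. Hence summing over $t$ yields
\[
f_\tau(L,i) \le \sum_{t\ge 0} f_\tau(L_t, i) = O\!\left(\gamma^{m-\alpha}\right)\sum_{t\ge 0} 2^{(t+1)\eta} = O\!\left(\gamma^{m-\alpha}\right),
\]
since the geometric series converges (the ratio $2^\eta < 1$, and the implied constant depends only on $\alpha, \delta, \tau, m$).

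The main obstacle is essentially bookkeeping: one must verify that the $\min\{1, l_i/\ell_t\}^{-\delta}$ factor in Lemma~\ref{L:garbage} is trivial here (which uses $l_j\le l_i$, so this is where the ``longer $i$'' hypothesis is used), and then check that the exponent $\eta$ is negative, which is precisely the content of the lower bound $\tau > 1-\delta(1-m/\alpha)$. No fading-metric assumption $m<\alpha$ is needed on its own; what is needed is that the specified range for $\tau$ is nonempty, which requires $\delta(\alpha-m)/\alpha < 1$, and this is automatic when $\delta\in(0,1)$ and $m<\alpha$.
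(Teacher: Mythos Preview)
Your approach is essentially the paper's: split $L$ into dyadic length classes, invoke Lemma~\ref{L:garbage} on each, and sum a geometric series whose convergence is equivalent to the hypothesis $\tau > 1 - \delta(1-m/\alpha)$. The only difference is that you index the classes from $l_i$ downward while the paper indexes from the shortest length upward; this is cosmetic.

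There is, however, a sign slip in your bookkeeping. You correctly note $l_i/\ell_t < 2^{t+1}$, but since $\eta = (1-\tau)\alpha - \delta(\alpha-m) < 0$, raising to the power $\eta$ \emph{reverses} inequalities: from $l_i/\ell_t < 2^{t+1}$ you only get $(l_i/\ell_t)^\eta > 2^{(t+1)\eta}$, which is useless as an upper bound. The fix is immediate: use the other endpoint of your class definition. Every $j\in L_t$ satisfies $l_j \le l_i/2^t$, hence $\ell_t \le l_i/2^t$, i.e.\ $l_i/\ell_t \ge 2^t$, and therefore
\[
\left(\frac{l_i}{\ell_t}\right)^{\eta} \le 2^{t\eta}.
\]
Summing $\sum_{t\ge 0} 2^{t\eta}$ then converges (ratio $2^{\eta}<1$) to a constant depending only on $\alpha,\delta,\tau,m$, giving $f_\tau(L,i)=O(\gamma^{m-\alpha})$ as claimed. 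With this one-line correction your proof is complete and matches the paper's.
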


  \begin{proof}
Let us split $L$ into length classes $L_1, L_2,\dots$  with 
$
L_t=\{j\in L: 2^{t-1}\ell \leq l_j<2^t \ell\},
$
 where $\ell$ is the shortest link length in $L$. Let $\ell_t$ be the shortest link length in $L_t$.
Note that each $L_t$ is an equilength $1$-independent set of links which are $(\gamma,\delta)$-independent from link $i$. Thus, the conditions of Lemma~\ref{L:garbage} hold for each $L_t$. Since all links in $L_t$ are shorter than link $i$, we conclude that
\[
{f_{\tau}(L_t,i)} = O\left(\gamma^{m-\alpha}\left(\frac{\ell_t}{l_i}\right)^{\delta(\alpha-m) - (1-\tau)\alpha }\right).
\]
Recall that $L_t$ are length classes and $\ell_t\geq 2^{t-1}\ell$. That allows us to combine the bounds above into a geometric series:
\[
{f_{\tau}(L,i)}= \sum_{1}^{\infty}{f_{\tau}(L_t,i)}\le \frac{C\cdot \gamma^{m-\alpha}}{l_i^{(1-\tau)\alpha - \delta(\alpha-m)}}\sum_{t=0}^{\lceil\log{l_i/\ell}\rceil}{(2^{t}\ell)^{(1-\tau)\alpha - \delta(\alpha-m)}},
\]
where $C$ is a constant. The upper limit of the last sum is obtained by the fact that link $i$ is not shorter than the longest link in $L$. Recall that  $\tau > 1- \delta (1-m/\alpha)$; hence, $\delta(\alpha-m)  - (1-\tau)\alpha> 0$.
Thus, the last sum is the sum of a \emph{growing} geometric progression and is $O(l_i^{(1-\tau)\alpha - \delta(\alpha-m)})$, implying the lemma.
 \end{proof}

 \begin{lemma}\label{L:mainlemma2}
 Let $L$ be a $1$-independent set of links and $i$ be a link s. t. $l_i\leq l_j$ and $i,j$ are $(\gamma,\delta)$-independent for all $j\in L$.  Then for each $\tau < 1- (1-\delta)(\alpha - m + 1)/\alpha$,
 $
 f_{\tau} (L, i)= O\left(\gamma^{m-\alpha}\right).
 $
 \end{lemma}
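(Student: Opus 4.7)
The plan is to mimic the proof of Lemma~\ref{L:mainlemma1}: I would partition $L$ into equilength length classes $L_t = \{j \in L : 2^{t-1}\ell \le l_j < 2^t \ell\}$ for $t \ge 1$, where $\ell$ denotes the shortest link length in $L$ (note that the hypothesis forces $\ell \ge l_i$). Each $L_t$ is equilength, $1$-independent, and $(\gamma,\delta)$-independent from $i$, so Lemma~\ref{L:garbage} applies to every class.

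The key difference from Lemma~\ref{L:mainlemma1} is that now $l_i \le \ell_t$ for every $t$, so $\min\{1, l_i/\ell_t\} = l_i/\ell_t$ and the ``extra'' factor in Lemma~\ref{L:garbage} becomes $(\ell_t/l_i)^{\delta}$. After multiplying and simplifying, the per-class bound takes the form
\[
f_\tau(L_t, i) = O\!\left(\gamma^{m-\alpha}\, \left(\frac{l_i}{\ell_t}\right)^{E}\right)
\]
for an exponent $E$ depending on $\tau,\delta,\alpha,m$. Since $\ell_t \ge 2^{t-1}\ell \ge 2^{t-1} l_i$, we have $(l_i/\ell_t)^{E} \le 2^{-(t-1)E}$, so summing over $t$ gives a decreasing geometric series whenever $E > 0$, yielding $f_\tau(L, i) = \sum_{t\ge 1} f_\tau(L_t, i) = O(\gamma^{m-\alpha})$, as desired.

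The crux is pinning down $E$ correctly, because it is $E>0$ that dictates the permissible range of $\tau$. Unlike in Lemma~\ref{L:mainlemma1}, where $l_i$ is the longer link and $(\gamma,\delta)$-independence reads $d(i,j) > \gamma l_i^{\delta} l_j^{1-\delta}$, here $l_j \ge l_i$, so the independence condition becomes $d(i,j) > \gamma l_j^{\delta} l_i^{1-\delta}$. Tracing this through the concentric annuli argument underlying Lemma~\ref{L:garbage} — where the innermost empty shell now has radius $q\ell_t$ with $q = \gamma(l_i/\ell_t)^{1-\delta}$ rather than $\gamma(l_i/\ell_t)^{\delta}$ — produces $E = (1-\tau)\alpha - (1-\delta)(\alpha - m + 1)$. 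The hypothesis $\tau < 1 - (1-\delta)(\alpha-m+1)/\alpha$ is exactly the statement $E > 0$.

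The main obstacle is to verify this exponent carefully. In particular, one must recognize that the role of $\delta$ flips to $1-\delta$ whenever the other link in each independence relation is shorter rather than longer, and check that this swap propagates through both the doubling bound on shell sizes and the dominant $1/q^{\alpha-m+1}$ tail from Fact~\ref{F:integral}. Once that is in hand, the telescoping-plus-geometric-series steps follow verbatim from the proof of Lemma~\ref{L:mainlemma1}, with the direction of decay reversed.
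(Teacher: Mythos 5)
Your proposal is correct and follows essentially the same route as the paper: partition $L$ into equilength classes, apply Lemma~\ref{L:garbage} to each, and sum a convergent geometric series under the hypothesis on $\tau$. Your extra care in rederiving the exponent is in fact warranted: a literal application of the stated bound of Lemma~\ref{L:garbage} would give $(1-\tau)\alpha-\delta(\alpha-m+1)$ rather than $(1-\tau)\alpha-(1-\delta)(\alpha-m+1)$, and your observation that the innermost empty annulus has radius governed by $q=\gamma(l_i/\ell_t)^{1-\delta}$ (since the links of $L_t$ are now the longer ones in each independence relation) is exactly what justifies the sharper exponent that the paper's proof uses.
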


\begin{proof}
Let us split $L$ into length classes $L_1, L_2,\dots$, where 
$
L_t=\{j\in L: 2^{t-1}l_i \leq l_j<2^t l_i\}.
$
Note that each $L_t$ is a equilength $1$-independent set of links that are $(\gamma, \delta)$-independent from link $i$. Let $\ell_t$ denote the shortest link length in $L_t$. Recall that $\ell_t \ge 2^{t-1}l_i$. Thus, Lemma~\ref{L:garbage} implies:
\[
{f_{\tau}(L_t,i)} = O\left(\frac{1}{\gamma^{\alpha-m}}\left(\frac{l_i}{\ell_t}\right)^{(1-\tau)\alpha - (1-\delta)(\alpha-m+1)}\right)=O\left(\frac{1}{\gamma^{\alpha-m}}\left(\frac{1}{2^{t-1}}\right)^{(1-\tau)\alpha - (1-\delta)(\alpha-m+1)}\right).
\]
Recall that $\tau < 1-(1-\delta)(\alpha- m + 1)/\alpha$, implying $\eta=(1-\tau)\alpha - (1-\delta)(\alpha-m+1) > 0$.
Thus, we have:
\[
{f_{\tau}(L,i)}= \sum_{1}^{\infty}{{f_{\tau}(L_t,i)}}\leq \gamma^{m-\alpha}\sum_{t=0}^{\lceil\log{l_i/\ell}\rceil}{\frac{1}{2^{\eta t}}} = O\left(\gamma^{m-\alpha}\right),
\]
where $C$ is a constant.
 \end{proof}

The main theorem of this sub-section follows by combining Lemmas~\ref{L:mainlemma1}~and~\ref{L:mainlemma2}: if parameters $\delta$ and $\tau$ satisfy the conditions of both lemmas simultaneously, then a set of $(\gamma,\delta)$-independent links will be $P_{\tau}$-feasible, provided the constant $\gamma$ is large enough, since $\alpha > m$.
\begin{theorem}\label{T:obliviouspowers}
If $\delta\in (\delta_0,1)$ and the constant $\gamma>1$ is large enough, the graphs $\cG_{\gamma}^{\delta}(L)$ are upper bound graphs for any set $L$, where $\delta_0=\frac{\alpha - m +1}{2(\alpha - m) +1}$. Namely, there exists $\tau\in (0,1)$ s.t. any $(\gamma,\delta)$-independent set is $P_{\tau}$-feasible. Moreover, we can choose $\tau=\delta$ whenever $\delta>\alpha/(2\alpha - m)$ and $m>1$.
\end{theorem}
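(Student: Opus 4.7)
The plan is to combine Lemmas~\ref{L:mainlemma1} and~\ref{L:mainlemma2} by splitting any $(\gamma,\delta)$-independent set $S$ around a fixed link $i$ into the subset $S_{\le}$ of links no longer than $i$ and the subset $S_{>}$ of links strictly longer than $i$. Both subsets are $1$-independent (for $\gamma>1$ a $(\gamma,\delta)$-independent pair is in particular $1$-independent) and each link in them is $(\gamma,\delta)$-independent from $i$, so the two lemmas apply to $S_{\le}$ and $S_{>}$ respectively. Adding the resulting bounds gives $f_\tau(S,i)=O(\gamma^{m-\alpha})$, and since $\alpha>m$ in a fading metric, choosing $\gamma$ a sufficiently large constant makes $a_{P_\tau}(S,i)\le f_\tau(S,i)\le 1/\beta$, i.e. $S$ is $P_\tau$-feasible.

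The nontrivial part is to verify that the $\tau$-ranges required by the two lemmas overlap exactly on the interval prescribed by $\delta_0$. Lemma~\ref{L:mainlemma1} requires $\tau>1-\delta(1-m/\alpha)$ and Lemma~\ref{L:mainlemma2} requires $\tau<1-(1-\delta)(\alpha-m+1)/\alpha$. First I would check that the lower endpoint is strictly less than the upper endpoint; a short algebraic manipulation (multiply through by $\alpha$ and collect $\delta$-terms) rearranges this to
\[
\delta\bigl(2(\alpha-m)+1\bigr)>\alpha-m+1,
\]
which is precisely $\delta>\delta_0$. Under this assumption the open interval of admissible $\tau$ is nonempty, so we can pick any $\tau$ in it and obtain the first assertion of the theorem.

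For the ``moreover'' clause, I would substitute $\tau=\delta$ into the two inequalities and simplify. The condition $\delta>1-\delta(1-m/\alpha)$ rearranges to $\delta(2\alpha-m)>\alpha$, i.e. $\delta>\alpha/(2\alpha-m)$. The condition $\delta<1-(1-\delta)(\alpha-m+1)/\alpha$ rearranges, after canceling the common term $\alpha-m+1$, to $\delta(m-1)<m-1$, which is exactly $\delta<1$ when $m>1$ (and is vacuous or reversed when $m=1$, which is why the hypothesis $m>1$ is imposed). Thus under $m>1$ and $\delta>\alpha/(2\alpha-m)$ both lemmas apply at $\tau=\delta$, finishing the proof.

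The main obstacle is purely the bookkeeping of the two linear-in-$\tau$ inequalities and verifying that their common feasibility region collapses to the elegant thresholds $\delta_0$ and $\alpha/(2\alpha-m)$; once the splitting $S=S_{\le}\cup S_{>}$ is in place, the feasibility conclusion is an immediate consequence of the two lemmas together with the fading-metric assumption $\alpha>m$.
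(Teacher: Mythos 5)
Your proposal is correct and follows essentially the same route as the paper: combine Lemmas~\ref{L:mainlemma1} and~\ref{L:mainlemma2} by splitting the independent set into shorter and longer links, and observe that the two $\tau$-constraints $\tau>1-\delta(\alpha-m)/\alpha$ and $\tau<1-(1-\delta)(\alpha-m+1)/\alpha$ admit a common solution in $(0,1)$ exactly when $\delta>\delta_0$, with the $\tau=\delta$ case reducing to $\delta>\alpha/(2\alpha-m)$ and $\delta(m-1)<m-1$. The only step you leave implicit is that the nonempty intersection automatically lies inside $(0,1)$ (the paper checks this via the outer inequalities of its chain, and it is immediate from $\delta<1$ and $\alpha>m$), so there is no gap.
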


\begin{proof}
We need to show that Lemmas~\ref{L:mainlemma1}~and~\ref{L:mainlemma2} hold simultaneously for the given $\delta$ and certain $\tau\in (0,1)$. Then we can adjust $\gamma$ in order to make $L$ feasible. The constraints of the mentioned lemmas on $\delta$ and $\tau$ are as follows:
\begin{equation}\label{E:taudel}
\tau>1-\delta\frac{\alpha - m}{\alpha}\mbox{ and } \tau < 1- (1-\delta)\frac{\alpha-m+1}{\alpha}.
\end{equation}
So it is enough to show that any $\delta \in (\delta_0, 1)$ is a solution for the following system of inequalities: 
\begin{equation}\label{E:taueq}
0<1-\delta\frac{\alpha - m}{\alpha}< 1- (1-\delta)\frac{\alpha-m+1}{\alpha}<1.
\end{equation}
The first and third inequalities hold whenever $\delta < 1$ and $\alpha > m$. The second inequality is equivalent to $\delta > \delta_0$.
The conditions for choosing $\tau=\delta$ follow by setting $\tau=\delta$ in~(\ref{E:taudel}).
\end{proof}

\mypara{Putting the Pieces Together}
All the components of the conflict graph framework are ready now, and a direct application of the technique described at the beginning of this section together with Thm.s~\ref{T:algorithms}-\ref{T:obliviouspowers} yield our main result.

\begin{theorem}\label{T:schapprox}
There are $O(\log{\log{\Delta}})$-approximation algorithms for  {\scheduling} and {\wcapacity} using oblivious power schemes. The approximation is obtained by approximating vertex coloring or maximum weighted independent set problems in $\cG_{\gamma}^\delta(L)$ with appropriate constants $\gamma$ and $\delta$.
\end{theorem}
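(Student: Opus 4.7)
The plan is to string together the four ingredients developed in this section: Theorem~\ref{T:algorithms} (constant-factor approximability of vertex coloring and MWIS in $\cG_\gamma^\delta$), Theorem~\ref{T:chromaticgap} (the $O(\log\log\Delta)$ chromatic gap between $\cG_\gamma^\delta$ and $\cG_\gamma$), Theorem~\ref{T:lowerbound} (every feasible set is $\gamma$-independent, so $\cG_\gamma$ is a lower-bound graph), and Theorem~\ref{T:obliviouspowers} (every $(\gamma,\delta)$-independent set is $P_\tau$-feasible, so $\cG_\gamma^\delta$ is an upper-bound graph). First fix any $\delta\in(\delta_0,1)$ and let $\gamma$ be a constant large enough that Theorem~\ref{T:obliviouspowers} applies, yielding an exponent $\tau\in(0,1)$ such that every $(\gamma,\delta)$-independent set of $L$ is $P_\tau$-feasible. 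By Theorem~\ref{T:lowerbound}, combined with Theorem~\ref{T:constantsens} to reconcile the two constants $\gamma$, every feasible subset of $L$ is $\gamma$-independent; thus $\chi(\cG_\gamma(L))$ is a lower bound on the optimum of {\scheduling}, and the maximum weight of a $\gamma$-independent subset, call it $\mu_\gamma(L)$, is an upper bound on the optimum of {\wcapacity}.

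For {\scheduling}, run the constant-factor coloring algorithm of Theorem~\ref{T:algorithms} on $\cG_\gamma^\delta(L)$. This produces a partition of $L$ into $O(\chi(\cG_\gamma^\delta(L)))$ classes, each $(\gamma,\delta)$-independent and hence $P_\tau$-feasible. Theorem~\ref{T:chromaticgap} bounds the number of classes by $O(\log\log\Delta)\cdot\chi(\cG_\gamma(L))$, and the previous paragraph bounds this by $O(\log\log\Delta)$ times the optimum, giving the desired approximation.

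For {\wcapacity}, run the constant-factor MWIS algorithm of Theorem~\ref{T:algorithms} on $\cG_\gamma^\delta(L)$ with the given weights, obtaining a $(\gamma,\delta)$-independent, and therefore $P_\tau$-feasible, subset $S$ of weight $\Omega(\mu_{\gamma,\delta}(L))$, where $\mu_{\gamma,\delta}$ denotes the maximum weight of a $(\gamma,\delta)$-independent subset. It remains to show $\mu_{\gamma,\delta}(L) = \Omega(\mu_\gamma(L)/\log\log\Delta)$; once this is established, $w(S)=\Omega(\text{OPT}_{\wcapacity}/\log\log\Delta)$ follows from the lower-bound property. Let $I$ be a maximum-weight $\gamma$-independent subset of $L$. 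Since $I$ is $\gamma$-independent, $\chi(\cG_\gamma(I))=1$, so applying Theorem~\ref{T:chromaticgap} to the induced subgraph on $I$ yields $\chi(\cG_\gamma^\delta(I))=O(\log\log\Delta)$. Thus $I$ partitions into $O(\log\log\Delta)$ $(\gamma,\delta)$-independent subsets, at least one of which has weight $\Omega(w(I)/\log\log\Delta)=\Omega(\mu_\gamma(L)/\log\log\Delta)$, as required.

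The main, and really only, subtlety is this last step: Theorem~\ref{T:chromaticgap} is stated as a ratio of chromatic numbers, while the {\wcapacity} reduction needs it as a partitioning statement on arbitrary $\gamma$-independent subsets; applying the theorem to the induced subgraph on $I$ bridges this gap. Everything else is a direct composition of the constants from Theorems~\ref{T:algorithms}, \ref{T:chromaticgap}, \ref{T:lowerbound}, \ref{T:obliviouspowers}, and \ref{T:constantsens}, which multiply to a single $O(\log\log\Delta)$ factor.
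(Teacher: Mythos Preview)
Your proposal is correct and follows essentially the same route as the paper: fix $\delta,\gamma,\tau$ via Theorem~\ref{T:obliviouspowers}, approximate coloring/MWIS in $\cG_\gamma^\delta(L)$ via Theorem~\ref{T:algorithms}, and derive the approximation ratio from Theorems~\ref{T:lowerbound} and~\ref{T:chromaticgap}. The paper's own proof says exactly this for {\scheduling} and simply omits the {\wcapacity} argument as ``similar''; you have in fact spelled that case out (partitioning a maximum-weight $\gamma$-independent set $I$ via Theorem~\ref{T:chromaticgap} applied to the induced subgraph) and also made explicit the use of Theorem~\ref{T:constantsens} to reconcile the $\gamma$ of Theorem~\ref{T:lowerbound} with the (possibly larger) $\gamma$ required by Theorem~\ref{T:obliviouspowers}, a point the paper leaves implicit.
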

\begin{proof}
We present the proof for {\scheduling}. The argument for {\wcapacity} is similar and is omitted.
The algorithm is as follows. Given an input set $L$ in a fading metric,
choose values of parameters $\gamma$, $\delta$ and $\tau$ in accordance with Thm.~\ref{T:obliviouspowers}, construct the graph $G_{\gamma}^{\delta}(L)$ and find constant factor approximate coloring (by Thm.~\ref{T:algorithms}). Output the color classes obtained. The feasibility of the output follows from Thm.~\ref{T:obliviouspowers}. The approximation factor follows from the combination of Thm.s~\ref{T:lowerbound} and \ref{T:chromaticgap}.
\end{proof}

\section{Limitations of Oblivious Power Schemes}\label{sec:limitation}

\mypara{Euclidean Metrics}
We proved that using certain  oblivious power schemes, it is possible to approximate {\scheduling} and {\wcapacity} within approximation factor $O(\log{\log{\Delta}})$. As shown in Thm.~\ref{T:halldorssontalg} below, this bound is essentially best possible when using oblivious power assignments. 
The following was shown in greater generality in \cite{us:talg12}.

\begin{theorem}\cite{us:talg12}\label{T:halldorssontalg}
For every power scheme $P_\tau$ there is an infinite family of feasible sets $S$ arranged in a straight line such that any schedule of $S$ using $P_\tau$ requires $\Omega(\log{\log{\Delta}})$ slots.
\end{theorem}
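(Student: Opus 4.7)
Fix $\tau\in[0,1]$. For each integer $k\ge 1$, my plan is to construct a set $S_k$ of links on the real line, feasible under some non-oblivious power assignment, with length ratio $\Delta(S_k)=2^{\Theta(2^k)}$, such that any $P_\tau$-schedule of $S_k$ requires at least $k$ slots; this yields $\Omega(\log\log\Delta)$. The construction is organized around length scales $L_1<L_2<\cdots<L_k$ with $L_{i+1}=L_i^\rho$ for a constant $\rho=\rho(\tau,\alpha)>1$; the doubly-exponential growth yields $\log\log\Delta(S_k)=\Theta(k)$. $S_k$ consists of a small ``layer'' of links at each scale $L_i$, placed on the line in a nested fashion in which each layer-$i$ cluster is glued near an endpoint of an adjacent layer-$(i+1)$ anchor link, with placement offsets tuned as a function of $\tau$ and $\alpha$.

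The first claim to establish is \emph{$P_\tau$-isolation of layers}: any $P_\tau$-feasible subset of $S_k$ contains links from only one layer, so the $P_\tau$-conflict graph of $S_k$ has chromatic number at least $k$. For cross-layer links $e_i,e_j$ of lengths $L_i<L_j$ at line-distance $d$, the $P_\tau$-affectances are
\begin{equation*}
a_{P_\tau}(e_j,e_i)=\frac{L_j^{\tau\alpha}L_i^{(1-\tau)\alpha}}{d^\alpha},\qquad a_{P_\tau}(e_i,e_j)=\frac{L_i^{\tau\alpha}L_j^{(1-\tau)\alpha}}{d^\alpha},
\end{equation*}
and the placement is arranged so that, depending on whether $\tau<1/2$ or $\tau>1/2$, at least one of these two quantities exceeds $1/\beta$ whenever $e_i,e_j$ belong to adjacent layers, by a margin given by the choice of $\rho$. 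The symmetric case $\tau=1/2$ is handled by taking either direction.

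The second claim is \emph{global feasibility}: $S_k$ admits a non-$P_\tau$ power assignment under which all SINR constraints hold simultaneously. Here the doubly-exponential length gap is essential: with $P^\star(e)\propto l(e)^{\tau'\alpha}$ for a suitably chosen $\tau'\neq\tau$ (or a more bespoke tiered assignment that favours the longer layers), the interference contributions from non-adjacent length scales decay geometrically in $\rho$ and sum to a bounded quantity, so the SINR threshold at each link is dominated by only its same-scale neighbours and is comfortably satisfied.

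The main obstacle is threading the needle between the two claims: $P_\tau$-conflict across adjacent layers requires sufficient geometric proximity, while global feasibility requires enough separation to suppress the cross-scale interference sum. The tension is resolved because the required separation at scale $i$ is measured in units of $L_i$ while the required proximity to scale $i+1$ is measured in units of the much larger $L_{i+1}=L_i^\rho$; the doubly-exponential growth precisely matches these two incompatible scales. The delicate step is thus choosing $\rho$ large enough that the adjacent-layer conflicts overcome the $\alpha$-polynomial distance decay for every $\tau\in[0,1]$, while simultaneously small enough that a single non-oblivious power assignment can satisfy all $k$ SINR inequalities of $S_k$; essentially a two-sided geometric-series calculation, split at $\tau=1/2$ by symmetry.
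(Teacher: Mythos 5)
Your overall architecture (doubly-exponentially growing length scales, a $\tau$-dependent placement creating $P_\tau$-conflicts, and a non-oblivious assignment certifying global feasibility) is the right shape, and note that the paper itself only cites \cite{us:talg12} for this theorem rather than reproving it. But there is a genuine gap in the lower-bound half. You establish pairwise $P_\tau$-conflicts only between \emph{adjacent} layers, and then conclude that every $P_\tau$-feasible subset lies in a single layer, hence chromatic number at least $k$. That inference fails: if only adjacent layers conflict, the cross-layer conflict graph is path-like and $2$-colorable, so the union of all odd-indexed layers and the union of all even-indexed layers are each cross-layer conflict-free; combined with your own feasibility argument (interference between non-adjacent scales decays geometrically and sums to $O(1)$), each of these two unions is $p$-$P_\tau$-feasible for some constant $p$ and hence splits into $O(1)$ feasible slots by Thm.~\ref{T:signalstrengthening}. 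The instance is then scheduled in $O(1)$ slots under $P_\tau$ and the $\Omega(k)$ bound collapses. Indeed, in the nested geometry you describe, a layer-$i$ link sits at distance roughly $L_{i+1}^{\tau}L_{i+2}^{1-\tau}$ (or the $\tau\leftrightarrow 1-\tau$ variant) from the layer-$(i+2)$ anchor, and both affectances between scales $i$ and $i+2$ are then $\ll 1$: non-adjacent layers provably do not conflict. What the argument actually needs is that \emph{every} pair of links from different scales (or all but $O(1)$ of them) conflicts under $P_\tau$, i.e., each short link lies simultaneously inside the conflict window of every longer anchor. That simultaneous requirement, played against the necessary condition for pairwise feasibility under \emph{some} power assignment --- namely $d_{ij}d_{ji}\ge \beta^{2/\alpha} l_il_j$, since $a_P(i,j)\,a_P(j,i)=(l_il_j)^\alpha/(d_{ij}d_{ji})^\alpha$ is power-independent --- is the real needle to thread and is what forces the doubly-exponential growth; your version of the tension (adjacent-layer conflict vs.\ feasibility) is too weak to produce more than a constant.

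A second, smaller problem: your conflict criterion uses a single symmetric distance $d$ for both affectance directions. For $\tau=1/2$ the two thresholds coincide at $d<\beta^{1/\alpha}\sqrt{l_il_j}$, and any pair meeting that has $a_P(i,j)\,a_P(j,i)>1/\beta^2$ under \emph{every} power assignment, i.e., the pair is infeasible outright --- so ``taking either direction'' destroys global feasibility exactly where you need it. The case $\tau=1/2$ (and $\tau$ near it) must exploit the asymmetry between $d(s_i,r_j)$ and $d(s_j,r_i)$, or replace pairwise conflicts by accumulated many-to-one interference.
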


Recall that we obtained our approximations only for oblivious power schemes $P_{\tau}$ with $\tau$ falling in a specific sub-interval of $(0,1)$. What happens with the other oblivious power schemes? Interestingly, as we show below,  oblivious power schemes $P_{\tau}$ with $\tau$ outside the range stipulated by Lemmas~\ref{L:mainlemma1}~and~\ref{L:mainlemma2} yield only $O(\log{\Delta})$-approximation for {\scheduling} and {\wcapacity}.

We consider a family of sets $L$ of $(1,1)$-independent links that are located in the Euclidean plane (hence, $m=2$). With separation $\delta=1$, the range of oblivious power schemes $P_{\tau}$ making $L$ (almost) feasible according to Lemmas~\ref{L:mainlemma1}~and~\ref{L:mainlemma2} is:
$
2/\alpha < \tau < 1.
$
In the following theorem we show that  no scheduling algorithm can achieve better than $O(\log{\Delta})$-approximation of {\scheduling}  for the set  $L$ using a scheme $P_{\tau}$ with $\tau < \frac{2}{\alpha}$. An equivalent lower bound applies to {\wcapacity}.
\begin{theorem}\label{T:hardinstance}
  For infinitely many $n$, there is a set of $n$ pairwise $(1, 1)$-independent links in the plane that requires $\Omega(\log n)$ slots when using $P_{\tau}$, $\tau < 2/\alpha$. In terms of $\Delta$, the number of slots required is $\Omega(\log \Delta)$.
\end{theorem}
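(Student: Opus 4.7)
The plan is to exhibit a hierarchical family of instances in the plane whose links are pairwise $(1,1)$-independent yet force $P_\tau$-schedules (for $\tau < 2/\alpha$) to use $\Omega(\log n)$ slots, then to derive the lower bound by a weighted counting argument. Specifically, for a sufficiently large constant $\kappa \geq 2$, I would build a rooted tree of depth $k$ with branching factor $B = \kappa^2$, where a node at level $t$ corresponds to a link of length $\ell_t = \kappa^{k-t}$; in particular $\Delta = \kappa^k$. The tree is embedded in the plane so that the $B$ children of any level-$t$ link $u$ sit at distance $\Theta(\ell_t)$ from $u$ (in an annulus just outside the $(1,1)$-independence radius), with sibling distance at least $2\ell_{t+1}$, and with the $B$ sibling subtrees occupying pairwise-disjoint regions of diameter $O(\ell_t)$. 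A standard planar packing argument accommodates $\kappa^2$ children (a disk of radius $\Theta(\ell_t)$ holds $\Theta(\kappa^2)$ sub-disks of radius $\Theta(\ell_t/\kappa)$), and triangle-inequality checks give global pairwise $(1,1)$-independence. The resulting set has $n = \Theta(B^k)$ links with $\log n = \Theta(\log \Delta) = \Theta(k)$.

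The crucial geometric estimate is that every level-$(t+d)$ descendant $v$ of a level-$t$ link $u$ lies at distance $\Theta(\ell_t)$ from $u$ (the subtree diameter telescopes: $\sum_{s \geq 0} \ell_{t+s} = O(\ell_t)$), so $a_{P_\tau}(v,u) = \Omega(\kappa^{-d\tau\alpha})$. Hence, for any feasible set $F$ and any $u \in F$ at level $t$,
\[
\sum_{d \geq 1} \bigl|F \cap \mathrm{desc}(u) \cap \mathrm{level}(t+d)\bigr| \cdot \kappa^{-d\tau\alpha} = O(1).
\]
When $\tau\alpha < 2$, the auxiliary factor $\kappa^{d(\tau\alpha - 2)}$ is bounded by $\kappa^{\tau\alpha - 2}$ for $d \geq 1$, so the displayed bound converts (via $\kappa^{-2d} = \kappa^{-d\tau\alpha} \cdot \kappa^{d(\tau\alpha - 2)}$) into $\sum_{d \geq 1} n_d \kappa^{-2d} = O(\kappa^{\tau\alpha - 2})$, where $n_d$ denotes the displayed count.

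The lower bound then follows from a weight argument. Assign $\omega_v := 1/B^{t(v)}$ to every link; the total weight is $\sum_{t=0}^{k} B^t / B^t = k+1$. I would prove by induction on $t(u)$ from leaves up to the root that $W(u) := \sum_{v \in F \cap \mathrm{subtree}(u)} \omega_v \leq (1+\epsilon)/B^{t(u)}$, where $\epsilon = O(\kappa^{\tau\alpha - 2})$ and $F$ is any feasible set. Base: leaves satisfy $W(u) \leq 1/B^k$. Step: if $u \notin F$, then $W(u) = \sum_c W(c) \leq B \cdot (1+\epsilon)/B^{t+1} = (1+\epsilon)/B^t$; if $u \in F$, the above feasibility bound gives $W(u) = (1/B^t)\bigl[1 + \sum_{d \geq 1} n_d \kappa^{-2d}\bigr] \leq (1+\epsilon)/B^t$ directly, bypassing the children's inductive bound. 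Applied at the root, every feasible $F$ satisfies $W(F) \leq 1+\epsilon = O(1)$, so any schedule of the instance uses at least $(k+1)/(1+\epsilon) = \Omega(k) = \Omega(\log n) = \Omega(\log \Delta)$ slots.

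The main obstacle I anticipate is the geometric construction: arranging the tree in the plane so that (i) pairwise $(1,1)$-independence holds globally, including across different subtrees and different levels, (ii) each subtree has diameter $O(\ell_t)$, so that the affectance lower bound $a(v,u) = \Omega(\kappa^{-d\tau\alpha})$ on descendants is valid, and (iii) $B = \kappa^2$ children actually fit around each parent within the annulus dictated by (i). These three constraints interact and must be parameterized consistently in $\kappa$ (the inter-subtree spacing and annulus radii must be chosen in the right order of magnitude). Once the geometric setup is secured, the affectance estimate and the weight induction are routine calculations as sketched above.
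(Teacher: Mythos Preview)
Your proposal is correct and follows essentially the same route as the paper: a recursive tree instance with branching factor $\kappa^2$ and length ratio $\Theta(\kappa)$ between levels, combined with the identical weight-per-level argument and the same two-case induction (root in $F$ versus root not in $F$) to bound the weight of any $P_\tau$-feasible set by $O(1)$. The geometric construction you flag as the main obstacle is precisely what the paper makes explicit, placing the $\kappa^2$ sub-instances in a $\kappa\times\kappa$ grid above the parent link so that the subtree diameter satisfies $h_t\le 2\ell_t$ and all descendants lie within distance $3\ell_t$ of the parent's receiver; with this concrete layout your affectance and independence estimates go through exactly as you sketched.
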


 \begin{proof}
We assume that $\beta = 1$.
We inductively construct a weighted set of links $S_t = S_t(q)$ in the plane, given a parameter $q$.
We shall denote by $S_t^{(x,y)}$ a copy of the instance $S_t$ translated by the vector $(x,y)$. 

The instance $S_0$ consists of the single link $0$ of length $l_0=1$, with $s_0$ at the origin and $r_0$ at $(l_0,0) = (1,0)$. For $t \ge 1$, the instance $S_t$ consists of the link $t$ of length $3ql_{t-1}$ and of weight $\omega(t) = q^{2t}$ with $s_t$ at the origin and $r_t$ at  $(l_t,0)$, along with $q^2$ sub-instances $S_{t-1}^{(x_i,y_j)}$ with
$i,j=0, 1, \ldots q-1$, $x_i=2i l_{t-1}$ and $y_j =l_t + j (l_{t-1} + h_{t-1})$, where $h_t$ is the height of $S_{t-1}$.  This completes the construction.
See Figure \ref{fig:ht}.

\begin{figure}[htbp]
\begin{center}
\includegraphics[width=0.44\textwidth, natwidth=1075, natheight=795]{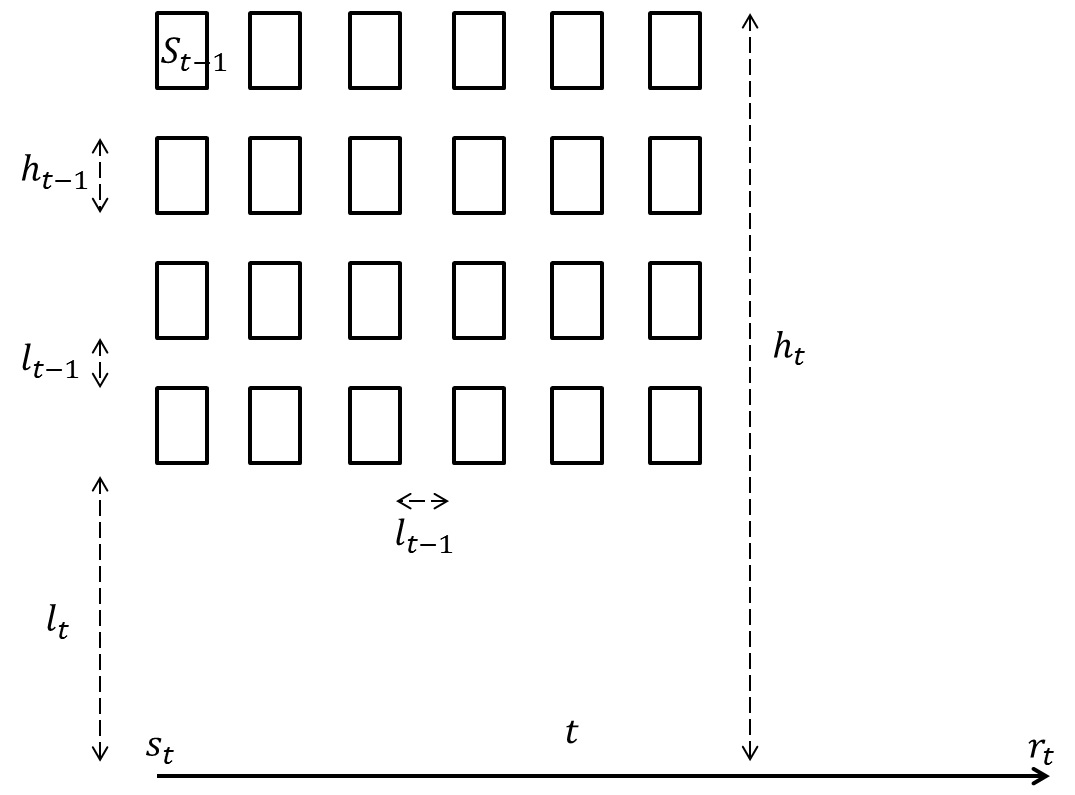}
\caption{$(1,1)$-separate instance $S_t$. Each rectangle represents a sub-instance isomorphic to $S_{t-1}$}
\label{fig:ht}
\end{center}
\end{figure}

It is easily verified that links in $S_t$ are $(1,1)$-separate; hence, $S_t$ can be scheduled in constant number of slots
using an oblivious power scheme, by Lemmas~\ref{L:mainlemma1},~\ref{L:mainlemma2} and Thm.~\ref{T:signalstrengthening}.
It remains to show that it requires $\Omega(\log n)$ slots when using $P_{\tau}$, with $\tau < 2/\alpha$.

Note that the number $n_t$ of links in $S_t$ is $n_t = 1 + q^2 n_{t-1} = \sum_{i=0}^{t-1} q^{2 i} = (q^{2t}-1)/(q^2-1)$. 
Thus, $\log n_t = \theta(t \log q)$. Let us call $t$ the \emph{main link} of $S_t$. Let us fix an index $t>0$. Let $L_k$ denote the set of main links of the copies of $S_k$ in $S_t$, where $k < t$. We call $L_k$ \emph{the $k$-th level of $S_t$}. All the links in $L_k$ have equal length and weight $q^{2k}$. It is easy to check that $W_t=\omega(L_k)=q^{2t}$,
and the total weight of all links is $t W_t = \theta(W_t \log n)$.

\begin{lemma}
Suppose $q \ge (2\cdot 3^{\tau\alpha})^{1/(2-\tau\alpha)}$.
Let $T$ be a subset of links in $S_t(q)$ that is feasible under $P_{\tau}$ with $\tau < 2/\alpha$.
Then, $\omega(T) \le 2\cdot 3^{\alpha}W_t$.
\label{lem:feas-wt}
\end{lemma}
\begin{proof}
First, an observation.
\begin{claim}
Let $T_k$ be a subset of level $k$ links in $S_t$ s.t. $T_k \cup \{t\}$ is $P_\tau$-feasible.
Then, $\omega(T_k) \le 3^{\alpha}2^{-(t-k)} W_t$.
\end{claim}
\begin{proof}
Let us first estimate the distance $d(i,t)$ for each link $i\in S_t\setminus\{t\}$. Note that $l_t=(3q)^t$, $h_t=l_t + q(l_{t-1} + h_{t-1})$ and $h_0=0$ (because $S_0$ consists of one horizontal link), so we can see that $h_t \leq 2l_t$. It follows that for each $i\in S_t\setminus\{t\}$,
$d(i,t)\leq 3l_t$, implying, for each $i\in T_k$,
 \[ 
 a_{P_{\tau}}(i, t)=\frac{P_{\tau}(i)l_t^\alpha}{P_{\tau}(t) d_{i,t}^\alpha}
   = \left(\frac{l_{i}}{l_{t}}\right)^{\tau\alpha} 
     \left(\frac{l_t}{d_{it}}\right)^\alpha
   \geq  \frac{1}{3^{\alpha}}(3q)^{-(t-i)\tau\alpha}. 
   \]
Since $a_{P_{\tau}}(L_k,t)\leq 1$, $T_k$ contains at most $3^{\alpha}(3q)^{(t-i)\tau\alpha}$ links, each of weight $q^{2k}$,
for a total weight of
\[
 \omega(T_k) \le 3^\alpha(3q)^{(t-k)\tau\alpha} \cdot q^{2k} 
  = 3^{\alpha}W_t \frac{(3q)^{(t-k)\tau\alpha}}{q^{2(t-k)}}
  = 3^{\alpha}W_t \left(\frac{(3q)^{\tau\alpha}}{q^{2}}\right)^{t-k}. 
  \]
The bound on $q$ ensures that $q^{2-\tau\alpha} \ge 2 \cdot 3^{\tau\alpha}$ or $q^2 \ge 2 \cdot (3q)^{\tau\alpha}$.
Thus, $\omega(T_k) \le 3^\alpha W_t (1/2)^{t-k}$, as claimed.
\end{proof}

We now prove the lemma by induction on $t$.
For $t=0$, $S_t$ consists of only one link of weight $1 = q^{0} = W_0$.
For the inductive step, we consider two cases. Suppose first that $T$ contains the link $t$.
Then, it follows from the claim that 
\[ 
\omega(T) \le \omega(t) + \sum_{k=0}^{t-1} \omega(T\cap L_k) 
    \le W_t + 3^{\alpha}W_t \sum_{k=0}^{t-1} 2^{-(t-k)} < 2\cdot 3^\alpha W_t.
\]
If, on the other hand, $T$ does not contain $t$,  it follows from the inductive hypothesis that the total weight of links from $T$ in each of the $q^2$ sub-instances $S_{t-1}^{x,y}$ is at most $2 \cdot 3^{\alpha}W_{t-1}$, for a grand total of 
$\omega(T) \le q^2 \cdot 2\cdot 3^{\alpha} W_{t-1} = 2\cdot 3^{\alpha} W_t$.
\end{proof}

Observe that the maximum length $\Delta(S_t) = \Delta_t$ of a link in $S_t$ is the length $l_t = (3q)^{t}$ of link $t$, which implies that $\log \Delta_t = \theta(t \log q) = \theta(\log n)$. Thus, $\Omega(\log \Delta)$ is also a lower bound.
\end{proof}

\begin{corollary}
Any algorithm for {\scheduling} that uses power assignment $
P_{\tau}$, $\tau < 2/\alpha$, is no better than $\Omega(\log{n})$ ($\Omega(\log{\Delta})$)-approximate in terms of $n$ (in terms of $\Delta$, resp.).
The same holds for {\wcapacity}.
\end{corollary}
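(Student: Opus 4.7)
The plan is to derive both claims as immediate consequences of Theorem~\ref{T:hardinstance} together with the upper-bound machinery already built in Section~\ref{sec:graphapprox}. The corollary is essentially a packaging step: Theorem~\ref{T:hardinstance} provides hard instances with the required lower bound on $P_\tau$-performance, and we only need to certify that the optimum of the original (power-unrestricted) problem is small on those instances so that the ratio blows up.

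For {\scheduling}, I would take the instance $S_t(q)$ from the proof of Theorem~\ref{T:hardinstance}. Its links are pairwise $(1,1)$-independent, so by Theorem~\ref{T:obliviouspowers} (applied with $\delta=1$, which is covered by Lemmas~\ref{L:mainlemma1} and~\ref{L:mainlemma2} in the regime $2/\alpha<\tau<1$) together with Theorem~\ref{T:signalstrengthening}, $S_t$ can be partitioned into $O(1)$ $P_\tau$-feasible sets for some $\tau\in(2/\alpha,1)$. Hence the optimum of {\scheduling} on $S_t$ is $O(1)$. On the other hand, Theorem~\ref{T:hardinstance} guarantees that any schedule of $S_t$ using $P_\tau$ with $\tau<2/\alpha$ needs $\Omega(\log n)=\Omega(\log\Delta)$ slots. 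Taking the ratio gives the claimed approximation lower bound.

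For {\wcapacity}, I would reuse the weighted version of $S_t$ constructed inside the proof of Theorem~\ref{T:hardinstance}, where level $k$ carries total weight $W_t$ and the total weight across all $t$ levels is $\Theta(W_t\log n)$. Since $S_t$ is $(1,1)$-independent, the same argument as above shows it can be partitioned into $O(1)$ feasible sets under optimal power control; by averaging, at least one of these sets has weight $\Omega(W_t\log n)$, so the optimum of {\wcapacity} on $S_t$ is $\Omega(W_t\log n)$. In contrast, Lemma~\ref{lem:feas-wt} (already proved within Theorem~\ref{T:hardinstance}) shows that every $P_\tau$-feasible subset of $S_t$ with $\tau<2/\alpha$ has weight at most $2\cdot 3^\alpha W_t$. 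The ratio of these quantities is again $\Omega(\log n)=\Omega(\log\Delta)$.

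There is no real obstacle here: both bounds follow by juxtaposing the instance-specific upper bound on the $P_\tau$-restricted objective (Theorem~\ref{T:hardinstance} and Lemma~\ref{lem:feas-wt}) with the $O(1)$ scheduling upper bound that oblivious power with $\tau$ in the "good" range provides on $(1,1)$-independent sets. The only minor care needed is to check that $q$ can be chosen simultaneously to satisfy the condition $q\ge(2\cdot 3^{\tau\alpha})^{1/(2-\tau\alpha)}$ of Lemma~\ref{lem:feas-wt} and the construction parameters from Theorem~\ref{T:hardinstance}; since both are fixed constants depending only on $\alpha$ and $\tau$, this is immediate.
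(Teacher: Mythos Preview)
Your proposal is correct and mirrors the paper's reasoning: the corollary is stated without proof, as a direct consequence of Theorem~\ref{T:hardinstance}, whose proof already establishes both that $S_t$ is schedulable in $O(1)$ slots (via Lemmas~\ref{L:mainlemma1},~\ref{L:mainlemma2} and Theorem~\ref{T:signalstrengthening}) and that any $P_\tau$-feasible subset with $\tau<2/\alpha$ has weight $O(W_t)$ (Lemma~\ref{lem:feas-wt}); you simply spell out the averaging step for {\wcapacity}. One small technical slip: Theorem~\ref{T:obliviouspowers} and Lemmas~\ref{L:mainlemma1},~\ref{L:mainlemma2} require $\delta\in(0,1)$ strictly, so you cannot literally set $\delta=1$---but $(1,1)$-independence trivially implies $(1,\delta)$-independence for every $\delta<1$, so just pick any $\delta$ in the admissible range and the argument goes through unchanged.
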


As for the power schemes $P_\tau$ with $\tau\ge 1$, it is known that there is no algorithm using these power schemes that achieves better than $O(\log{\Delta})$-approximation in terms of $\Delta$. This is shown in~\cite{moscibrodaconnectivity} for the case $\tau=1$ (i.e. when the linear power scheme is used). It is also shown in~\cite{tonoyancapacity} that for any $P_{\tau}$ with $\tau>1$,  the optimal capacity is of the same order as the optimal capacity of $P_1$, which implies the claim for this case too.

\mypara{General Metrics}
\label{sec:sched}
Recall that for {\capacity}, the $O(\log\log \Delta)$-approximation results hold in arbitrary metrics~\cite{halwatpowerofnonuniform}.
This begs the question whether this might also hold for {\scheduling} and {\wcapacity}.
A negative answer was given for {\scheduling} in \cite[Thm.~5.1]{SODA11}: no bound of the form $f(\Delta)$, for any function $f$ of $\Delta$ alone. Namely, a feasible instance of $n$ \emph{equal length links} (i.e. $\Delta=1$) in a tree metric was given in \cite{SODA11}, for which $P_{\tau}$ (which is necessarily uniform power ($P_0$) on equal length links) requires $\Omega(\log n)$ slots.  Thus, there is a separation between possible bounds for {\capacity} and {\scheduling}. We simplify below this construction and show that it also gives the same lower bound for {\wcapacity}. 

\begin{theorem}\label{thm:wtcap-lb}
For infinitely many $n$, there is a feasible instance $L$ of  {\wcapacity} with $n$ equal length links in a metric space for which any solution that is feasible with uniform power is of weight only $O(\omega(L)/\log n)$.
\end{theorem}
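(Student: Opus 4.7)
The plan is to recycle and simplify the scheduling-lower-bound construction of \cite[Thm.~5.1]{SODA11} --- a feasible family of $n$ equal-length links in a tree metric whose uniform-power scheduling number is $\Omega(\log n)$ --- and to impose a weight function that lifts the scheduling lower bound to a matching \wcapacity bound. A cleaner presentation of that construction would give $L$ as $n$ unit-length links organized into $k = \Theta(\log n)$ ``levels'' $L_0, \ldots, L_{k-1}$ of equal size $n/k$, in a tree metric satisfying: (a) each level is uniform-power feasible in isolation, because intra-level tree-distances are large; (b) the aggregate uniform-power affectance $a_{P_0}(L \setminus \{i\}, i)$ received by any link $i$ is $\Omega(\log n)$, coming from $\Theta(\log n)$ ``close neighbors'' symmetrically spread across the other levels; and (c) the whole $L$ is feasible under a non-uniform power assignment that puts a geometric gradient across levels.

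Given such a construction, I assign every link the uniform weight $\omega(i) = 1/n$, so that $\omega(L) = 1$. For any uniform-power feasible subset $T \subseteq L$, the per-link SINR constraint $a_{P_0}(T \setminus \{i\}, i) \le 1/\beta$ combined with (b) shows that only an $O(1/\log n)$ fraction of the close-neighbor affectance at each link $i \in T$ can be retained within $T$. Summing this per-link bound using the symmetric distribution of close neighbors across $L$ (so that no ``adversarial'' $T$ can avoid carrying its share of close-neighbor pairs) yields $|T| = O(n/\log n)$, whence $\omega(T) = |T|/n = O(1/\log n) = O(\omega(L)/\log n)$, as required.

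The main obstacle is realizing properties (a)--(c) simultaneously in a tree metric. Packing $\Theta(\log n)$ close neighbors around each equal-length link is blocked in Euclidean / doubling space by volume arguments (essentially the proof of Thm.~\ref{T:obliviouspowers} applied in reverse), but is attainable in tree metrics via a spine-and-appendage construction that decouples intra-level from inter-level distances; the non-uniform feasibility in (c) is then guaranteed by a geometric per-level power ladder, which makes the interference seen by any link from shallower, higher-powered levels a convergent geometric series. A secondary obstacle is converting the per-link affectance bound in (b) into the global size bound $|T| = O(n/\log n)$, which relies on the self-similar symmetry of the construction so that the close-neighbor graph behaves like an expander rather than allowing a large ``close-pair-free'' independent set. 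The novel content over the scheduling result of \cite{SODA11} is the uniform per-link weighting, together with the observation that the construction's symmetry transfers the scheduling lower bound to a \wcapacity lower bound of the same order.
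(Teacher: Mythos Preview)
Your plan has a fundamental contradiction between properties (b) and (c). With equal-length links (say length~$1$), uniform-power affectance is $a_{P_0}(j,i)=1/d_{ji}^\alpha$, while under any power assignment $P$ one has $a_P(j,i)=(P(j)/P(i))\cdot a_{P_0}(j,i)$. Now take the link $i^\ast$ of \emph{minimum} power under $P$. Then $P(j)/P(i^\ast)\ge 1$ for every $j$, so $a_P(L\setminus\{i^\ast\},i^\ast)\ge a_{P_0}(L\setminus\{i^\ast\},i^\ast)$. If (b) holds --- every link receives $\Omega(\log n)$ uniform-power affectance --- then $a_P(L\setminus\{i^\ast\},i^\ast)=\Omega(\log n)$, and $L$ is infeasible under $P$. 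Since $P$ was arbitrary, $L$ is infeasible under \emph{every} power assignment, contradicting (c). Thus no instance with uniform weights can work: uniform weights force you to bound the unweighted capacity by $O(n/\log n)$, which (by the argument just given, or simply by the $O(1)$-approximation of unweighted \capacity) is incompatible with global feasibility of $L$.

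The paper's proof escapes this by using \emph{non-uniform} weights and \emph{geometrically growing} level sizes. The levels $L_1,\dots,L_K$ have $|L_k|=4^{k-1}$, each link in $L_k$ carries weight $1/|L_k|$ (so every level has total weight~$1$ and $\omega(L)=K=\Theta(\log n)$), and the inter-level distance is $t_k+t_{k'}$ with $t_k=(\gamma|L_k|)^{1/\alpha}$. The whole set is feasible under the ladder $P(i)=2^{-k}$ for $i\in L_k$. Crucially, property~(b) fails here: links in the largest level $L_K$ receive only $O(1)$ uniform-power affectance, and indeed $L_K$ alone is a uniform-power feasible set of size $\Theta(n)$ --- but of weight only~$1$. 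The weight bound is obtained not by an expander/independence argument but by a direct affectance count: if a uniform-feasible set $S$ has weight exceeding a constant, then the link in $S$ from the smallest-index level present already receives affectance exceeding $1$ from the higher-index levels in $S$, because $a_{P_0}(L_k\cap S,j)\ge \Theta(1)\cdot|L_k\cap S|/|L_k|=\Theta(\omega(L_k\cap S))$. Your ``secondary obstacle'' (turning per-link bounds into a global size bound via expansion) is therefore not just hand-waved but unnecessary once the weights carry the level information.
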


\begin{proof}
We give a construction of a set of weighted equal length links that is feasible with a certain power assignment, but for which any 
subset that is feasible  using oblivious power, contains at most $\Omega(\log n)$ fraction of the total weight.  Since the links have equal lengths, the only possible oblivious assignment is the uniform one. This yields a $\Omega(\log n)$ lower bound on the price of oblivious
power for the weighted capacity problem.

\newcommand{\hi}{{1}}
\newcommand{\hj}{{\hat{j}}}

The set $L$ of links consists of $K$ subsets, $L_1, L_2, \ldots, L_K$ for $K>0$.   Each set $L_k$ contains $4^{k-1}$ links,
each of weight $1/|L_k|$, for a total weight of 1.  $L_k$ also has an associated number $t_k = (\gamma |L_k|)^{\frac{1}{\alpha}}$, for a constant parameter $\gamma$ to be determined. The distance between a link in $L_k$ and another link in $L_{k'}$ is simply $t_k + t_{k'}$.  We assume that $\beta = 1$. This completes the construction.  The total number of links $n = |L| =\sum_{k = 1 \ldots K} |L_k| = (4^K-1)/3$, and the total weight is $K$.

It was shown in \cite{SODA11} that $L$ is feasible using some power assignment. We give below a simplified proof.
We first show that any feasible set using uniform power has weight $O(1)$, or $O(1/\log n)$-fraction of the whole.

\begin{claim}
Let $S \subseteq L$ be a subset of links of weight $\omega(S) \ge 1 + \gamma 2^\alpha$. Then, $S$ is infeasible under uniform power.
\end{claim}
\begin{proof}
Let $\tilde{k}$ be the minimum value for which an element of $L_{\tilde{k}}$ exists in $S$. Consider an arbitrary link $l_j \in L_{\tilde{k}}\cap S$. Note that for $i \in L_k$ where $k > \tilde{k}$,  $d_{i j} = t_k + t_{\tilde{k}} \le 2 t_k = 2 (\gamma |L_k|)^{1/\alpha}$. The affectance $a(i,j) = a_{P_0}(i,j)$ under uniform power is then
\[
 a(i,j) = \frac{1}{d_{ij}^\alpha} \ge \frac{1}{\gamma 2^\alpha}
\cdot \frac{1}{|L_k|}\ . 
\]

Now,
\[
\sum_{i \in S} a({i},j) \geq \sum_{k > \tilde{k}} a({L_k \cap S},j)
  \ge \frac{1}{\gamma 2^\alpha} \sum_{k > \tilde{k}} \frac{|L_k \cap S|}{|L_k|}
  = \frac{\omega(S \setminus L_{\tilde{k}})}{\gamma 2^\alpha} 
  \ge \frac{\omega(S)-1}{\gamma 2^\alpha} > 1\ .
\]
\end{proof}

\begin{claim}
$L$ is feasible, assuming $\gamma \ge 6$.
\label{claim:feas}
\end{claim}

\begin{proof}
We will use the power assignment $P$ defined by $P(i) = \frac{1}{2^{k}}$, for $i \in L_k$.
Consider $j \in L_{\tilde{k}}$ and $i \in L_k$, for some $\tilde{k}$, $k$. 
Then, $d_{ij}^\alpha > t_{\max(k,\tilde{k})} = \gamma 2^{2(\max(k,\tilde{k})-1)}$. 
Thus,  
\[ 
a_P({L_k},j) = |L_k| \frac{2^{\tilde{k} - k}}{d_{ij}^\alpha}
  \le 2^{2(k-1)}  \cdot \frac{2^{\tilde{k} - k}}{\gamma \cdot 2^{2(\max(k,\tilde{k})-1)}} = \frac{1}{\gamma} 2^{\min(k,\tilde{k}) - \max(k,\tilde{k}) +1}\ . \]
It follows that
\[ a_{P}(L, j) = \sum_{k > \tilde{k}} a_P({L_k},j) + \sum_{k \le \tilde{k}} a_P({L_k},j)
  < \frac{1}{\gamma} \left(
    \sum_{k > \tilde{k}} 2^{\tilde{k}-k+1} + \sum_{k \le \tilde{k}} 2^{k - \tilde{k}+1} \right)
  < \frac{1}{\gamma} \left( 
    \sum_{x=0}^\infty \frac{1}{2^x} + \sum_{x=0}^\infty \frac{2}{2^x} \right)
  = \frac{6}{\gamma}\ . \]
Thus, for $\gamma \ge 6$, $L$ is feasible.
\end{proof}
Thm. \ref{thm:wtcap-lb} now follows.
\end{proof}

\begin{corollary}
  The use of oblivious power assignments cannot obtain approximations
  of {\scheduling} or {\wcapacity} within $o(\log n)$ factor in arbitrary general metrics.
\end{corollary}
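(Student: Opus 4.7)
The plan is to derive both lower bounds directly from Theorem~\ref{thm:wtcap-lb}, whose construction $L$ consists of $n$ equilength links (so that the only available oblivious scheme is uniform power) and is itself feasible with a non-oblivious assignment $P$ exhibited in the proof of Claim~\ref{claim:feas}. No new construction is required; the two parts of the corollary can be read off the same instance.

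For {\wcapacity}, the claim is essentially a restatement of Theorem~\ref{thm:wtcap-lb}. Since $L$ is feasible, the optimum weight of a feasible subset is $\omega(L) = K = \Theta(\log n)$, while the theorem asserts that any subset feasible under oblivious (hence uniform) power has weight only $O(\omega(L)/\log n) = O(1)$. Therefore the approximation ratio of any algorithm constrained to use oblivious power is $\Omega(\log n)$, irrespective of its running time; converting to $\Delta$ is moot here since $\Delta = 1$.

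For {\scheduling}, the plan is to lift the weighted-capacity obstruction to a partitioning obstruction via a simple pigeonhole. Feasibility of $L$ implies that the optimum schedule uses a single slot. Suppose some oblivious-power algorithm partitions $L$ into $s$ uniform-feasible slots $T_1,\ldots,T_s$. By the first claim inside the proof of Theorem~\ref{thm:wtcap-lb}, each $T_j$ satisfies $\omega(T_j) \le 1 + \gamma 2^\alpha = O(1)$. Summing,
\[
K \;=\; \omega(L) \;=\; \sum_{j=1}^{s} \omega(T_j) \;=\; O(s),
\]
so $s = \Omega(K) = \Omega(\log n)$, giving the claimed approximation lower bound.

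I do not anticipate a real obstacle: the only subtlety is making sure the uniform weighting $\omega(L_k)=1$ per group in the construction of Theorem~\ref{thm:wtcap-lb} is what drives both arguments, and in particular that the weight bound on a single uniform-feasible set translates into a multiplicative lower bound on the number of slots. Since $L$ has $\Delta=1$, no $\Delta$-dependent statement can hope to capture this phenomenon, which is precisely the separation from the {\capacity} setting discussed before Theorem~\ref{thm:wtcap-lb}.
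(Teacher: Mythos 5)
Your proposal is correct and follows the paper's (largely implicit) derivation of the corollary from Theorem~\ref{thm:wtcap-lb}: equal lengths force any oblivious scheme to be uniform, the instance is feasible in one slot, and the $O(1)$ weight bound per uniform-feasible subset gives both the $\Omega(\log n)$ gap for {\wcapacity} and, by your pigeonhole summation, the $\Omega(\log n)$ slot count for {\scheduling}. The only difference is cosmetic: the paper attributes the {\scheduling} part to the cited construction of \cite[Thm.~5.1]{SODA11} rather than spelling out the pigeonhole step on its own instance, which you do, making the argument self-contained.
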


\mypara{Weak Links}
Recall that in order to obtain our approximations, we assumed that for each link $i$, $P(i)\ge c\beta N l_i^\alpha$ for a constant $c>1$. However, this is not always achievable when nodes have limited power. Suppose that each sender node has maximum power $P_{max}$. For concreteness, we assume that $c=2$. A link $i$ is called a \emph{weak link} if $P_{max} \le 2\beta N l_i^\alpha$.  Note that a link is weak because it is too long for its maximum power. In other terms, link $i$ is weak if $l_i \ge l_{max}/2^{1/\alpha}$, where  $l_{max}=(P_{max}/\beta N)^{1/\alpha}$ is the maximum length a link can have to be able to overcome the noise when using maximum power. Scheduling weak links may be considered as a separate problem. Let $\tau$-{\wscheduling} denote the problem of scheduling weak links with power scheme $P_{\tau}$. For a weak link $i$, let us call $e_i=c_i^{1/\alpha}l_i$ the \emph{effective length} of link $i$ and let $\Delta_e(S)=\max_{i,j\in S}{e_i/e_j}$. One approach to {\wscheduling} is to split the set of weak links into $O(\log{\Delta_e})$ effective-length classes $S'$ with $\Delta_e(S')\le 2$ and solve each of the classes using known algorithms for non-weak scheduling, thus achieving $O(\log{\Delta_e})$ approximation. Unfortunately, no better than $O(\log{\Delta_e})$ or $O(\log{n})$ approximation algorithm is known. The following theorem shows that constant factor approximation of {\wscheduling} is at least as hard as constant factor approximation of {\scheduling} with fixed uniform power scheme (denoted {\uscheduling}).
\begin{theorem}\label{T:weakhard}
There is a polynomial-time reduction from {\uscheduling}
to $\tau$-{\wscheduling} for any $\tau\in [0,1)$, transforming an arbitrary set $L$ of links to a set $W$ of weak links so that the scheduling number of $L$ with $P_0$ is within a constant factor of the scheduling
number of $W$ using $P_{\tau}$.
\end{theorem}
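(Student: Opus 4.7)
The plan is to reduce {\uscheduling} on $L$ to $\tau$-{\wscheduling} on a new instance $W$ constructed in an abstract metric, by ensuring the $P_\tau$-affectance of $W$ mirrors the uniform-power affectance of $L$ up to bounded constants. The key observation is that for a weak link the noise-induced factor $c_{w_i}=1/(1-\beta N\ell_{w_i}^\alpha/P_\tau(w_i))$ is nontrivial and can be tuned arbitrarily large by pushing $\ell_{w_i}$ close to $l_{max}$; in particular the effective length $e_{w_i}=c_{w_i}^{1/\alpha}\ell_{w_i}$ can be programmed to represent any desired scale, including a rescaled version of the original length $l_i$.

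Concretely, I would proceed in three steps. (i) Keep all weak lengths $\ell_{w_i}$ inside the narrow window $[l_{max}/2^{1/\alpha},l_{max}]$, so that for any $\tau\in[0,1)$ the length-dependent power factor $(\ell_{w_j}/\ell_{w_i})^{\tau\alpha}$ appearing in $a_{P_\tau}(w_j,w_i)$ stays within a multiplicative absolute constant. (ii) For each $i$, tune $\ell_{w_i}$ so that $e_{w_i}=K\cdot l_i$ for a single global constant $K\ge 2^{1/\alpha}l_{max}/\min_i l_i$; this is achievable because $c_{w_i}\to\infty$ as $\ell_{w_i}\to l_{max}$, and the lower bound on $K$ keeps every $w_i$ weak. (iii) Place the endpoints $\{s_{w_i},r_{w_i}\}$ in an abstract metric defined by $d(s_{w_i},r_{w_i})=\ell_{w_i}$ and $d(s_{w_j},r_{w_i})=K\cdot d_{ji}$ for $i\ne j$, with auxiliary distances (like $d(s_{w_i},s_{w_j})$) chosen so as to satisfy the triangle inequality. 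A direct computation then gives
\[
 a_{P_\tau}(w_j,w_i)\;=\;\Theta\!\left(\frac{e_{w_i}^\alpha}{(Kd_{ji})^\alpha}\right)\;=\;\Theta\!\left(\frac{l_i^\alpha}{d_{ji}^\alpha}\right)\;=\;\Theta(a_{P_0}(j,i)),
\]
so that any $P_0$-feasible subset of $L$ is $O(1)$-$P_\tau$-feasible after translation to $W$ (and vice versa), and Thm.~\ref{T:signalstrengthening} converts the $O(1)$-feasibility into genuine feasibility at the cost of a constant factor. The scheduling numbers therefore agree up to a constant, which is the claimed reduction.

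The main obstacle is verifying the triangle inequality in step (iii) when $L$ has highly non-uniform lengths. Because the two encoded distances $d(s_{w_j},r_{w_i})=Kd_{ji}$ and $d(s_{w_i},r_{w_j})=Kd_{ij}$ can differ by up to $K(l_i+l_j)$ (via $L$'s triangle inequality), while the weak lengths $\ell_{w_i},\ell_{w_j}$ are both of order $l_{max}$, the four-point configurations must be balanced carefully. Two natural resolutions are to use the symmetric distance $d(i,j)$ in the encoding, at a constant-factor cost whenever $d_{ji}=\Theta(d(i,j))$, or to insert auxiliary Steiner-type nodes that enforce the required four-point distance pattern. The remaining checks — that the hidden constants depend only on $\alpha$ and $\tau$, that the reduction is polynomial-time, and that the argument uniformly covers the boundary case $\tau=0$ — are routine once the metric is consistently defined.
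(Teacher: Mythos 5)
Your proposal shares the paper's two key ingredients: (a) on weak links all schemes $P_\tau$, $\tau\in[0,1)$, give affectances within constant factors of uniform power, and (b) the noise coefficient $c$ blows up near $l_{max}$, so the effective length $e(x)=c(x)^{1/\alpha}x$ is an increasing bijection whose inverse lets you encode any (rescaled) original length $l_i$ as a weak length. Both of these appear, with essentially your argument, in the paper's proof. However, the step you yourself flag as "the main obstacle" --- realizing the prescribed distances $d(s_{w_j},r_{w_i})=Kd_{ji}$ as an actual metric --- is precisely where the real content of the reduction lies, and neither of your two proposed fixes closes it. Prescribing the asymmetric distances $Kd_{ji}$ directly violates the four-point inequalities once $K(l_i+l_j)\gg l_{max}$, as you note. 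Substituting the symmetric distance $d(i,j)$ only works "whenever $d_{ji}=\Theta(d(i,j))$," but that hypothesis fails in general (e.g.\ when $l_j\gg l_i$ and $r_j$ is near $r_i$, one has $d(i,j)\ll d_{ji}$), and then the affectance in $W$ strictly exceeds that in $L$, so feasibility need not transfer. The Steiner-node suggestion is not developed at all. As it stands the construction of $W$ is therefore incomplete.

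The paper avoids the abstract-metric route entirely: it keeps the instance in the plane, scales every \emph{sender} position by $X=2^{1/\alpha}\hat{l}/l_{min}$, and attaches each receiver at distance $l_{i'}=f(Xl_i)$ from its sender, so the triangle inequality is automatic. The distance distortion you worry about is then handled by first applying Theorem~\ref{T:signalstrengthening} to refine a feasible set into $3^\alpha\beta$-feasible subsets; this forces $d_{ij}\ge 3l_j$, hence $d(s_i,s_j)\ge 2d_{ij}/3$, and since sender--sender distances are scaled exactly by $X$ one gets $d_{i'j'}\ge Xd_{ij}/3$, i.e.\ all relevant distances are preserved up to constants. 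If you want to complete your write-up, replacing step (iii) with this geometric embedding plus the signal-strengthening preprocessing is the missing idea.
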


\begin{proof}
\newcommand{\pmax}{P_{max}}
\newcommand{\lmax}{l_{max}}
\newcommand{\lhat}{\hat{l}}

We consider links in the 2D plane -- similar arguments work for higher dimensions.

The length $\lhat = \lmax/2^{1/\alpha}=(\pmax/(2\beta N))^{1/\alpha}$ is the
border between non-weak and weak links in $L$ (representing also the link
length for which $c_i=2$). Let $l_{min}$ be the length of the shortest link in $L$.
We want to map the links with length in range $[l_{min}, \lmax)$ to the range
$[\lhat, \lmax)$ in a manner that preserves the scheduling number, modulo constant factors.

Let $c:(0,\lmax)\rightarrow (\beta,\infty)$ be the
function that produces the coefficients $c_i$ in the
affectance definition with uniform power, defined by $c(x) =
\frac{1}{1 - \beta N x^\alpha/\pmax} 
= \frac{1}{1 - (x/\lmax)^\alpha}$;
i.e., for a link $i$ using uniform power, $c(l_i) = c_i$.
Let $e:(0,\lmax)\rightarrow (0,\infty)$ be the function
converting link lengths to effective lengths, defined by $e(x) =
c(x)^{1/\alpha} \cdot x$. Note that $e$ is monotone increasing and therefore invertible;
let $f(x) = e^{-1}(x)$ denote its inverse. Note that $f$ is monotone increasing and sublinear.

We now describe the instance $W$ created from $L$, using the
same parameters $\alpha, \beta, N$.
Let $X = 2^{1/\alpha} \lhat/l_{min}$.
For each $i\in L$, there is a link $i'\in W$ of length $l_{i'} = f(X l_v)$,
such that $s_{i'} = s_i \cdot X$ and $r_{i'} = s_{i'} + (l_{i'}, 0))$. 
This completes the construction.
The idea with the construction is that affectances involving $l_i$
and $l_{i'}$ should be essentially the same, measured at
the senders, and not differ too much at the receivers.

First, we verify that the construction forms only weak links.
For each $i$, it holds that $X l_i \ge 2^{1/\alpha} \lhat = c(\lhat)^{1/\alpha} \lhat = e(\lhat)$,
so $l_{i'} = f(X l_i) \ge f(e(\lhat)) = \lhat$.
Also, $f(x) < \lmax$, for any $x \in \mathbb{R}$.
Hence, all link lengths are in $[\lhat,\lmax)$.

Next, we relate the impact of different
power schemes on the affectance of weak links. Let $U = P_0$ be the constant function representing uniform power.

\begin{claim}
Consider any $0 \le \tau < 1$.
Then, for any pair of \emph{weak} links $i, j$, $a_U(i,j) = \Theta(a_{P_{\tau}}(i,j)).$
\end{claim}

\begin{proof}
Let $t_j=(l_j/\lmax)^\alpha$. Since the links are weak, we have $t_j\in [1/2, 1)$ and $P_{\tau}(i) = \Theta(P_{\tau}(j))$. 
Recall, $P_{\tau}(j) = \pmax (l_j/\lmax)^{\tau\alpha} = t_j^{\tau} \pmax$ and $\pmax=\beta N \lmax^\alpha$.
Then, 
\[ c_j^U = c(l_j) = \frac{1}{1 - (l_j/\lmax)^\alpha} = \frac{1}{1 - t_j} 
  \quad \mbox{and} \quad
   c_j^{P_{\tau}} = \frac{1}{1 - \beta N l_j^\alpha/P_{\tau}(j)} 
     = \frac{1}{1 - t^{1-\tau}_j} \ .  \]
Noting that the function $h(x) = (1-x^{c})/(1-x)$, for $c \in (0,1)$,
satisfies 
$\lim_{x\rightarrow 1} h(x) = c$, and is bounded by a constant away from 0 elsewhere,
we get that $c_j^U = \theta(c_j^{P_{\tau}})$.
It follows that
 \[ a_{P_{\tau}}(i,j) = c_j^{P_{\tau}} \left(\frac{l_j}{d_{ji}}\right)^\alpha
    \cdot \left(\frac{P_{\tau}(i)}{P_{\tau}(j)}\right)^\alpha
 = \Theta(c_j^U) \left(\frac{l_j}{d_{ji}}\right)^\alpha
    \cdot \Theta(1)
 = \Theta(a_U(i,j)\ . \]
\end{proof}
Due to the last claim, for completing the proof, it is enough to show that the scheduling number of $L$ using $U$ is at most constant factor away from the scheduling number of $W$, again using $U$.
Let $S$ be a $U$-feasible subset of $L$ and $T'$ be a $U$-feasible subset of $W$. Also, let $S'=\{i'\in W : i \in S\}$ and $T=\{i\in L : i'\in T'\}$. We show that $S'$ and $T$ can be split into a constant number of $U$-feasible subsets w.r.t. $U$. 

Let us start with $S'$. First, observe that for any pair of links $i,j\in S$,
\begin{equation}
 a_U(i,j) = c_{j} (l_{j}/d_{ij})^\alpha
\label{eqn:avw}
\end{equation}
and
\begin{equation}
 a_U(i',j') = c_{j'} (l_{j'}/d_{i'j'})^\alpha
   = e(l_{j'})^\alpha/d_{i'j'}^\alpha 
   = c_j ( X l_{j}/d_{i'j'})^\alpha \ .
\label{eqn:avwprime}
\end{equation}
Since $S$ is $U$-feasible, we can split it into a constant number of subsets, each $3^\alpha\beta$-$U$-feasible, using Theorem~\ref{T:signalstrengthening}. Let $S_1$ be one of those subsets and $S'_1\subseteq S'$ be the corresponding subset in $W$. It suffices to show that $S'$ is $U$-feasible. Let $i,j\in S$. Then, $c_{j} (l_{j}/d_{ij})^\alpha=a_U(i,j) \le 1/(3^\alpha \beta)$, which implies that $d_{ij} \ge 3l_j$.
By the triangle inequality, $d(s_i,s_j)\ge d_{ij} - l_j \ge 2d_{ij}/3\ge 2l_j$. Then,  $d(s_{i'}, r_{i'}) \ge 2Xl_{j}\ge 2f(Xl_j)=2l_{j'}$, by $d(s_{i'},s_{j'})=Xd(s_i,s_j)$ and sublinearity of $f$. Using the triangle inequality, 
\begin{equation}\label{E:drelation}
d_{i'j'} \ge d(s_{i'}, r_{i'}) - l_{j'}\ge d(s_{i'}, r_{i'})/2=Xd(s_i,s_j)/2 \ge Xd_{ij}/3.
\end{equation}
Using (\ref{eqn:avwprime}) and (\ref{E:drelation}), we have:
\[
a_U(S', j')=\sum_{i'\in S'}{c_j\left(\frac{Xl_j}{d_{i',j'}}\right)^\alpha}\le 3^\alpha \sum_{i'\in S'}c_j\left(\frac{Xl_j}{Xd_{ij}}\right)^\alpha=3^\alpha a_U(S,j)\le 1/\beta,
\]
for any $j'\in S'$, which means that $S'$ is $U$-feasible. 

A symmetric argument applies for the sets $T$ and $T'$. This completes the proof.
\end{proof}

\section{A Note on Distributed Scheduling Algorithms} \label{sec:distributed}

The centralized algorithm for computing the schedules from Thm.~\ref{T:schapprox} boils down to finding a vertex coloring in a constant-simplicial graph $\cG_{\gamma}^{\delta}$. The latter is done by coloring the links greedily in decreasing order by length, i.e. a link gets the first color not yet used by its neighbors in $\cG_{\gamma}^\delta$ (see e.g.~\cite{yeborodin}). Here we sketch a way to compute the schedules distributively. The main idea is to split the computation into $\log{\Delta}$ stages, where in each stage only a single length class $L_t$ acts and the others are silent. The links use uniform power, proportional to the maximum link length in the class. The computation starts from the class $L_0$, containing the longest link. First, a subroutine for computing constant factor schedules for equilength sets (e.g.~\cite{BilelDerbel,Deltaplus1}) is run for $L_t$. As soon as the links in $L_t$ establish a coloring, they broadcast their colors with uniform power using a local broadcast algorithm (e.g.~\cite{HMLocalBroadcast}). This way, the links in $L_t$ notify their shorter neighbors in $\cG_{\gamma}^{\delta}(L)$ about their color. Then, length class $L_{t+1}$ proceeds with the coloring algorithm, and so on.

There are many details to be taken into account for implementing and evaluating the algorithm, which depend on exact assumptions and model characteristics, so the analysis below should be taken with a grain of salt. The algorithms from~\cite{BilelDerbel,Deltaplus1} for coloring equilength sets of links run in time $O(opt_t\log{n})$, where $opt_t$ denotes the optimum schedule length for $L_t$. It is known that $opt_t=\Theta(D(\cG_{\gamma}(L_t)))$ for a constant $\gamma$, where $D(G)$ denotes the maximum degree of graph $G$ (see~\cite{us:talg12}). Even though these algorithms compute a coloring from scratch, we believe it is possible to adapt them to take into account the set of colors used by earlier links, without degrading the runtime significantly.
The local broadcast subroutine takes $O(D(\cG_{\gamma}(L_t)) + \log^2{n})$ rounds (with collision detection \cite{HMLocalBroadcast}; 
$O(D(\cG_{\gamma}(L_t))\log n + \log^2{n})$ rounds without it \cite{Yu12,HMLocalBroadcast}), as the contention happens only between the links in $L_t$. Thus, realization of this idea would give a distributed computation of schedules in time $O(\log{n}\cdot\sum_{t}{opt_t}+\log^2{n}\log{\Delta})=O(opt\cdot\log{n}\log{\Delta} + \log^2{n}\log{\Delta})$, where $opt$ is the optimum schedule length.

\mypara{Acknowledgements} We thank Christian Konrad for discussions that led to the results in Sec.~\ref{sec:knownalgos}.

\clearpage

\bibliographystyle{abbrv}
\bibliography{Bibliography}

\end{document}